\newcommand{\Varphi}{\mathit{\Phi}}
\newcommand{\hdr}{\mathsf{H}}
\newcommand{\skr}{\mathsf{S}}
\newcommand{\spayoff}{\Pi^{\skr}}
\newcommand{\hpayoff}{\Pi^{\hdr}}
\newcommand{\neighb}[2]{N_{#2}(#1)}
\newcommand{\eneighb}[2]{N\!E_{#2}(#1)}
\newcommand{\dist}[3]{d_{#1}(#2,#3)}
\newcommand{\entrance}[3]{en_{#2}(#1,#3)}
\newcommand{\exit}[4]{ex_{#2}(#1,#3,#4)}
\newcommand{\pos}{\mathit{pos}}
\newcommand{\seq}{\mathit{seq}}
\newcommand{\act}{\mathit{act}}
\newcommand{\dfs}{\mathsf{DFS}}
\newcommand{\adjdfs}{\mathsf{aDFS}}
\newcommand{\Ex}{\mathbf{E}}
\newcommand{\cycles}{\mathcal{Z}}
\newcommand\xqed{%
  \leavevmode\unskip\penalty9999 \hbox{}\nobreak\hfill
  \quad\hbox{$\triangle$}}
\newtheorem{theorem}{Theorem}
\newtheorem{proposition}{Proposition}
\newtheorem{lemma}{Lemma}
\newtheorem{corollary}{Corollary}
\theoremstyle{remark}
\newtheorem{example}{Example}
\begin{document}
\title{Strategic hiding and exploration in networks\footnote{Marcin Dziubi\'{n}ski's work was supported by the Polish National Science Centre through grant 2018/29/B/ST6/00174.} }

\author{Francis Bloch\footnote{Universit\'{e} Paris 1 and Paris School of Economics, 48
Boulevard Jourdan, 75014 Paris, France.
\texttt{francis.bloch@univ-paris1.fr}},   Bhaskar Dutta\footnote{University of Warwick and Ashoka University, CV4 7AL, Coventry, UK \texttt{b.dutta@warwick.ac.uk}} and Marcin Dziubi\'{n}ski\footnote{Institute of Informatics, University of Warsaw, Banacha 2, 02-097, Warsaw, Poland   \texttt{m.dziubinski@mimuw.edu.pl}}}

\date{}

\maketitle

\abstract{We propose and study a model of strategic network design and exploration where the hider, subject to a budget constraint restricting the number of links, chooses a connected network and the location of an object. Meanwhile, the seeker, not observing the network and the location of the object, chooses a network exploration strategy starting at a fixed node in the network. The network exploration follows the expanding search paradigm of~\cite{AlpernLidbetter2013}.
We obtain a Nash equilibrium and characterize equilibrium payoffs in the case of linking budget allowing for trees only. We also give an upper bound on the expected number of steps needed to find the hider for the case where the linking budget allows for at most one cycle in the network.}


\maketitle 

\section{Introduction}
\label{sec:intro}

Operation of modern businesses and institutions, such as military agencies, hospitals, or universities involves storing and giving access to important and, often times, sensitive data. This access exposes the data to the threat of malicious attacks aiming at stealing or damaging the data. Attacks such as data theft or ransomware attacks (attacks in which the attacker encrypts the crucial data of a company or an institution making it unusable until a ransom is paid) constitute a large fraction of cyber attacks and bear a significant cost to economy. According to European Parliament News~\cite{eunews2023}, ``in 2022 ransomware attacks continued to be one of the main cyberthreats...'' and ``...it is estimated that in 2021 global ransomware reached \texteuro 18 billion worth of damages'', ``... they are also getting more complex''.

A sophisticated attack tactic that is increasingly being used in data breach attacks such as ransomware, data exfiltration, and espionage is \emph{network lateral movement}~\cite{croudstrike2023,cloudflare,sentinelone}. Using network lateral movement, the attacker spreads from an entry point through the network. The objective is to explore the network, learn the network topology, steal credentials and move through the network until valuable assets are found. 
Hence, executing the attack, the attacker searches through the network in an expanding fashion. Based on the knowledge of the network discovered so far, she chooses the next node to move to. Move to subsequent nodes exposes the attacker to the risk of being discovered (as it takes time and also requires the attacker to perform risky actions such as compromising subsequent credentials in the network). Therefore, to make these attacks harder to execute, apart from following good security practices, the network owners can use deception such as decoys to make the attacker easier to discover before the target of attack is reached. In particular, the choice of the network topology affects the speed with which the attacker is able to explore the network. This defence strategy comes at a cost, as larger networks consisting of many nodes are harder to maintain and long sequences of security procedures required to access the data stored in the network require additional time and, possibly, effort from the users.

Another example of situations where threats to network security involve network exploration and where network design can be used to complicate and slow down exploration are situations of infiltration of covert networks, such as criminal, terrorist, or underground movements networks. In these situations, the authorities exert surveillance over the known members of the organization, to learn their contacts and gradually discover new individuals within the organization. The objective of the authorities is to discover the key actors in the organization (e.g. the head of the organization). The objective of the network organizer is to slow down the authorities in order to detect surveillance on time and flee. Again, larger networks are more costly to maintain and longer ``chains of command'' affect negatively the efficiency such covert organizations.

\subsection{Our contribution}
Motivated by these considerations, we propose and study a stylized game theoretic model of network design and exploration where the hider designs a network and chooses a hiding place while the seeker (who is the attacker), aware only of the access point to the network, designs a network exploration strategy that proceeds through the network, gradually discovering new nodes, until the hiding place is found. The model of network exploration follows the expanding search paradigm of~\cite{AlpernLidbetter2013}. The payoff of the hider balances two objectives: he wants to keep the hiding place as close as possible to the entry point, but also wants to maximize the expected number of steps needed by the seeker to find the target. The objective of the seeker is to minimize the expected number of steps to reach the target. 
We find an equilibrium of this game in the case where the linking budget of the hider allows for constructing trees only. In this equilibrium, the seeker uses a randomized depth first search strategy (DFS) while the hider chooses a tree that consists of a line of optimal length, starting at the entry point, and ending with a star of the remaining nodes. We also show that all Nash equilibria are payoff equivalent for the hider and that the equilibrium strategy that we find guarantees at least the hider's equilibrium payoff against any strategy of the seeker. Under an additional assumption of a unique optimal hiding distance from the source for the hider, we show that Nash equilibria are payoff equivalent to the seeker. Moreover the DFS strategy is a best response to any equilibrium strategy of the hider.

In the remaining part of the paper we consider a case where the linking budget allows the hider to construct networks with a cycle. This possibility makes the search problem considerably more complicated, even if the seeker knows the maximal distance from the source at which the hider hides. In particular, the DFS strategy is not as effective as in the case of tree networks. We show that the seeker has a seeking strategy that allows her to find the hider in the expected number of steps that is close to but higher than that guaranteed by the DFS on trees. The strategy is a combination of the DFS, a variant of a bounded DFS, and a seeking strategy that adjusts the DFS after the cycle is discovered in the network. Our findings suggest, in particular, that using cycles as decoys can be an effective method to increase the expected number of steps the seeker needs to find the hider.

\subsection{Related literature}
\label{sec:literature}
The literature on search in networks dates back to the 19th century works of~\cite{Lucas1882} and~\cite{Tarry1895}, who described a search strategy (nowadays known as Tarry's traversal algorithm) that allows for finding a fixed node in an unknown network, starting from any node without visiting links more than twice. A special case of this search strategy is the well known depth first search algorithm~\cite{Even1979}.

Game theoretic analysis of strategic hiding and searching in networks is a part of a large literature on search games (see~\cite{AlpernGal2003,AlpernFokkink2013}). Within this literature, the games that are the closest related to the model studied in this paper are network games with an immobile hider and a fixed starting point of the seeker~\cite{Gal1979,AlpernLidbetter2013}. In these games, the hider chooses a fixed location in a network and aims to maximise the expected time until being captured by the seeker. The seeker chooses a search strategy starting at the given node with the objective of find the hider in minimal time. The games are zero-sum and the focus of the literature is on characterization of optimal seeking and hiding strategies as well as the value of the game~\cite{Gal1979,ReijniersePotters1993,Pavlovic1993,Reijnierse1995,Gal2001,AlpernLidbetter2013}.

Most of the literature in this area considers scenarios where the network is exogenous and known to both players. 
A search model most commonly studied in this literature is the \emph{pathwise search} model~\cite{Gal1979,ReijniersePotters1993,Pavlovic1993,Reijnierse1995,Gal2001}, in which moving from one point in the network to another requires the seeker to traverse all the links on a path between the two points. Links are weighted and the weight of a link represents the time needed to fully traverse the link.
In the case when the hider can hide in nodes as well as at any points within the links, an optimal search strategy must traverse all the links in the network and minimize the total weight of traversed links when doing so. A solution to this problem is a Chinese postman tour~\cite{EdmondsJohnson1973}, which visits each link at most twice and ends at the starting point. As was shown in~\cite{Reijnierse1995} and~\cite{Gal2001}, a mixed strategy which chooses a Chinese postman tour with probability $1/2$ in each direction (called randomized Chinese postman tour) is an optimal search strategy on weakly Eulerian networks.\footnote{
A network is Eulerian when it contains an Euler cycle (which is equivalent to the network being connected and each node having an even degree).
A network is weakly Eulerian when it is obtained by replacing a subset of nodes in a tree with Eulerian graphs.
}\footnote{A similar result for weakly cyclic graphs was earlier shown by~\cite{ReijniersePotters1993}.} 
On these graphs, the strategy guarantees that the expected total weight of the links traversed until the hider is found is equal to at most half of the length of a Chinese postman tour).
On trees, an optimal hiding strategy is a probability distribution on the leaves of the tree, called equal branch density~\cite{Gal1979,Gal2001}. It guarantees that any depth first search of the tree finds the hider in an expected time equal to half of the 
length of a Chinese postman tour which, on trees, is equal to the total weight of the links. This is also the value of 
the game in this case.
On Eulerian networks, the optimal hiding strategy is mixing uniformly on the set of all points of the graph. The value of the game is half of the total weight of the links, which is also the minimal possible value across all connected networks.
Finding an optimal search strategy for general networks is known to be NP-hard~\cite{vonStengelWerchner1997}.

The search paradigm that we adopt in this paper the \emph{expanding search} introduced by~\cite{AlpernLidbetter2013}. In this paradigm, moving from a node to an unvisited node requires traversing
a minimal weight link connecting the set of visited nodes and the unvisited node.
\cite{AlpernLidbetter2013} studied search games with an exogenous network that is known to both players. In the case of tree networks, they obtain characterization of optimal hiding and searching strategies. They show that the expected time to capture the hider is equal to half of the total weights of the links plus half of a mean distance from the root of the tree to its leaves, weighted according to the equal branch density. Beyond the tree networks, \cite{AlpernLidbetter2013} characterize optimal hiding and searching strategies on $2$-edge-connected networks (a network is $2$-edge-connected if it remains connected after a removal of any link.)
Using the ear decomposition of such networks due to~\cite{Robbins1939}, the authors define a mixed strategy that allows the seeker to reach the hiding node after traversing half of the links, in expectation.
The study of optimal search strategies under expanding search paradigm was continued by~\cite{AlpernLidbetter2019}, where strategies that approximate an optimal search time within a factor close to $1.2$ are described. In particular, the authors define a block optimal strategy which mixes between two different seeking strategies.
In general, the problem of constructing an optimal expanding search strategy, given a graph and a probability distribution over the nodes representing the hiding strategy, is NP-hard~\cite{AverbakhPereira2012}. In recent works, \cite{HermansLeusMatuschke2021} and~\cite{GriesbachHommelsheimKlimmSchewior2023} study efficient approximation algorithms for this problem. In particular, \cite{GriesbachHommelsheimKlimmSchewior2023} propose an algorithm that obtaine a nearly $2e$ approximation ratio.

The crucial difference between the model of~\cite{AlpernLidbetter2013} and the model studied in this paper is that in the former case the network is observed by the seeker before he chooses the seeking strategy, while in our case it is not. In particular, the search strategies for $2$-edge-connected networks, proposed in~\cite{AlpernLidbetter2013}, as well as the strategies described in~\cite{AlpernLidbetter2019} rely on the knowledge of the network and are not applicable in our case. Lack of knowledge of the network makes this approach impossible. The choice of the subsequent nodes to be visited is based on the fragment of the network gradually discovered by the seeker.
In particular, the seeking strategy, that we propose for networks with at most one cycle, does not require the knowledge of the network. Another important difference is that the seeker, knowing that the hider balances between having the hiding place close and hard to find, at the same time, has to take it into account when designing a good search strategy. As we illustrate with examples, this poses additional challenges when the hider's budget allows him to construct cycles.
Lastly, an important difference is the assumption that time is discrete, the hider hides at nodes only, and the seeker moves between nodes in one unit of time. \cite{AlpernLidbetter2013} consider a much more general model, allowing for discrete as well as continuous time. In particular, allowing for continuous time poses significant challenges in formalizing the problem and the expanding search, even if the network is known to the seeker. Wanting to focus on the challenge of formalizing hiding and seeking in an unknown network, we restrict attention to discrete time only.

Other search problems closely related to the model considered in this paper is the search on an unknown network, studied in~\cite{Anderson1981} and~\cite{GalAnderson1990}. In this problem a network is fixed but is unknown to the seeker and one of the nodes in the network is the target node (the ``exit'' of the maze). The links of the network are weighted. The search model is the pathwise search model. \cite{GalAnderson1990} proposed a search strategy called a fixed permutation search (which is a randomized version of the strategy proposed by~\cite{Tarry1895}). The strategy guarantees finding the target node in expected time which is at most the total weight of the links.
\cite{Anderson1981} proposed a zero sum game similar to the one considered in this paper. In this game the hider chooses the network of the given total weight of the links as well as the locations of the starting node and the target node. His objective is to maximize the total weight of links traversed by the seeker. The seeker chooses the search strategy aiming to minimize the payoff of the hider. 
As argued by~\cite{GalAnderson1990}, the fixed permutation search strategy guarantees that the seeker captures the hider in the time not greater than the total weight of the links. Given this result, the optimal strategy of the hider is immediate. The hider constructs a graph consisting of a single link of the given total weight.
There are two main differences between the model of~\cite{Anderson1981} and the model studied in this paper. Firstly, the model of~\cite{Anderson1981} uses the pathwise search model, while in this paper, we consider the expanding search model. Secondly, the hider does not face any cost of locating the target as far from the start as possible. This makes the hiding problem easy to solve (the hider constructs a line consiting of all the nodes and hides the target at the end of the line) and the focus is on designing search strategies that would be good on any networks.

Further, but still related to the problem studied in this paper, is the literature on the exploration of an unknown network prior to influence maximisation. Motivated by prohibitive cost of sampling large networks, \cite{WilderImmorlicaRiceTambe2018} propose a network exploration strategy which samples a small subset of nodes in the network and then performs a random walk of a limited number of steps in the network starting at these points. The acquired information about the network is then used to determine the probabilities with which the seeding node will be selected from the initially sampled set of nodes. Using experiments and theoretical analysis based on random networks, the authors show that this approach improves upon other influence maximisation methods that do not involve exploring the network. In a similar vein, \cite{EcklesEsfandiari2022} study the problem of influence maximisation with one or more seeds in an unknown network. They assume the independent cascade model and propose a method that first samples the network by seeding a number of nodes and then exploring the network by following the cascade. They consider the variant where the number of links that they sample along the cascade is unbounded as well as a variant where it is bounded. They compare the effectiveness of the two approaches and obtain results on approximation guarantees for the influence maximisation problem using these approaches.

Lastly, the problem studied in this paper is related to the problems of defence and attack in networks~\cite{DziubinskiGoyalVigier2016}. In this literature, the closest related is work by~\cite{BlochDuttaDziubinski2020} that considers a hide and seek game on a network, where the hiders chooses a network and the location of the leader in the network. Then, observing the network but not observing the location of the leader, the seeker chooses a node aiming at being at distance at most one from the location of the leader. The authors characterize a Nash equilibrium of the game and show that an optimal strategy for the hider is to either construct a cycle, a disconnected network, or a core-periphery network. In the two former cases the node of the leader  is chosen uniformly at random from the set of all nodes in the network while in the latter case it is chosen uniformly at random from the set of periphery nodes.

\section{Preliminaries}
\label{sec:prelim}

\paragraph{Graphs.} Throughout the paper we use the following standard graph theoretic notions and notations. An undirected graph over a set of nodes $V$ is a pair 
$G = \langle V, E\rangle$ where $E \subseteq \binom{V}{2}$ is the set of links in $G$ (and $\binom{V}{2}$ denotes the set of all two element subsets of $V$). 
Given a graph $G$ we use $V(G)$ to denote the set of nodes of $G$ and $E(G)$ to denote the set of links of $G$.

Given node $v \in V$ we use $\neighb{v}{G}$ to denote the set of all neighbours (the \emph{neighbourhood}) of $v$ in $G$, i.e.
$\neighb{v}{G} = \{u \in V : uv \in E(G) \}$. We also use $\eneighb{v}{G}$ to denote the set of all links between $v$ and its neighbours in $G$, i.e.
$\eneighb{v}{G} = \{uv : uv \in E(G)\}$.
Similarly, given a set of nodes $X \subseteq V$ we use $\neighb{X}{G}$ to denote the \emph{(open) neighbourhood} of $X$ in $G$, i.e. the set of neighbours of all the nodes in $X$ in $G$ excluding the nodes in $X$, $\neighb{X}{G} = \bigcup_{v \in X} \neighb{v}{G} \setminus X$, and we use $\eneighb{X}{G}$ to denote the set of links between $X$ and $\neighb{X}{G}$ in $G$, $\eneighb{X}{G} = \bigcup_{v \in X} \eneighb{v}{G}$.

Given a set of nodes $X \subseteq V$ we use $G[X]$ to denote the \emph{subgraph of $G$ induced by $X$}, i.e. $G$ restricted to the nodes in $X$ and the links between them only. Formally $G[X] = \langle X, E[X]\rangle$ where $E[X] = E(G) \cap \binom{X}{2}$. In addition we use $\overline{G}[X]$ to denote the subgraph of $G$ consisting of all the nodes in $X$ and the nodes in the neighbourhood of $X$, all the links between the nodes in $X$ and all the links from the nodes in $X$ and the nodes in the neighbourhood of $X$. Formally,
$\overline{G}[X] = \langle X \cup N_G(X), \overline{E}[X] \rangle$ where $\overline{E}[X] = E[X] \cup \eneighb{X}{G}$. We call such a graph
a \emph{closed subgraph of $G$ induced by $X$} (c.f. Figure~\ref{fig:closedsubgraph} for an example).

\begin{figure}[h]
\begin{center}
  \includegraphics[scale=0.75]{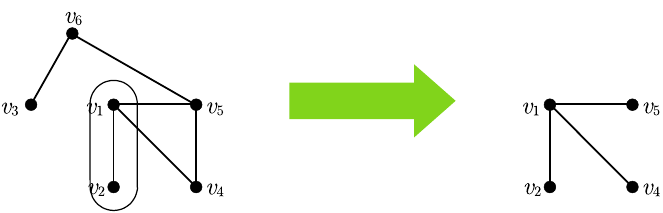}
\end{center}
  \caption{A closed subgraph induced by the set of nodes $\{v_1,v_2\}$.}
  \label{fig:closedsubgraph}
\end{figure}

%

Given two nodes $u,v \in V(G)$ a sequence of nodes $v_1,\ldots,v_m$ such that $v_1 = u$, $v_m = v$ and for all $i \in \{2,\ldots,m\}$, $v_{i-1} v_{i} \in E(G)$
is called a \emph{path} between $u$ and $v$ in $G$ and and the length of the path is  the number of links on it, i.e. it is $m-1$. A path is \emph{simple} if every node in $V$ appears at most once in it.
Given nodes $\{s,u\} \subseteq V(G)$ we use
\begin{equation*}
P_G(s,t) = \{v \in V(G) : \textrm{every path from $s$ to $t$ in $G$ contains $v$}\}
\end{equation*}
to denote the set of nodes that belong to every path from $s$ to $t$ in $G$.

Node $v$ is reachable from $u$ in $G$ if there is a path from $u$ to $v$ in $G$. The distance from $u$ to $v$ in $G$ is the length of the shortest path between them in $G$ and is denoted by $\dist{G}{u}{v}$. Graph $G$ is \emph{connected} if for any two nodes $u,v \in V(G)$, $v$ is reachable from $u$ in $G$.

Given a set  of nodes $V$ we use $\mathcal{G}(V)$ to denote the set of all undirected connected graphs that can be formed over $V$. 
We also use $\mathcal{G} = \bigcup_{X \subseteq V} \mathcal{G}(X)$ to denote the set of all undirected connected graphs that can be formed over $V$ or any of its subsets.

\paragraph{Finite sequences.} Given a set $V$ we use $V^{*}$ to denote the set of all finite sequences over the elements of $V$, including the empty sequence, $\varepsilon$. Given a sequence $\bm{z} \in V^{*}$ and an element $v \in V$, $v \cdot \bm{z}$ denotes the sequence obtained by adding $v$ at the beginning of $\bm{z}$ and $\bm{z} \cdot v$ denotes the sequence obtained by adding $v$ at the end of $\bm{z}$. Operator $\cdot$ is the concatenation operator.
Given a sequence of nodes, $\bm{x} \in V^{*}$, and a node $v \in V$ in the sequence, let $\pos(v,\bm{x})$ denote the position at which $v$ appears first in $\bm{x}$. 
Throughout the paper we will abuse the notation and, given a sequence $\bm{z}$, we will also use $\bm{z}$ to denote the set of nodes in $\bm{z}$.

\section{The model}
\label{sec:model}

There are two players, the \emph{hider}, $\hdr$, and the \emph{seeker}, $\skr$, and a set of nodes $V = \{s,v_1,\ldots,v_{n-1}\}$, where $s$ is a distinguished \emph{source node}.
The hider wants to construct an undirected connected graph over $V$, $G \in \mathcal{G}(V)$, and choose a node $h \in V$ where he will hide the treasure. His objective is to have the treasure hidden at a node reachable from the source and his benefit (if the treasure is not found by the seeker), $\Varphi : \mathcal{G}(V) \times V \rightarrow \mathbb{R}_{> 0}$, is a decreasing function of the distance of the hiding node from the source. Let 
\begin{equation*}
\Varphi(G,h) = A(d_G(s,h))
\end{equation*}
where $A : \mathbb{R} \rightarrow \mathbb{R}_{\geq 0}$ is a monotonous non-increasing function.
The hider faces a limited linking budget: the network he chooses must contain at most $b \in \{n-1,\ldots,n(n-1)/2\}$ links.

Knowing the objectives and the budget of the hider, the seeker wants to find the node with the treasure. To do so, he chooses a sequence of nodes that he inspects one by one. Choosing the next node in the sequence,
the seeker observes all the nodes that he visited so far, together with links between them, as well as the set of nodes that are neighbours of the nodes visited so far in the graph chosen by the hider (the frontier nodes). At the beginning the seeker observes node $s$ and its neighbours only.

A \emph{seeking strategy} is a function that maps: 
\begin{itemize}
\item a sequence of nodes starting with $s$ and not containing all nodes in $V$ (e.g. the sequence of nodes visited so far),
\item a graph over the set of nodes in the sequence and the frontier nodes (e.g. the subgraph of the graph chosen by the hider restricted to the nodes visited so far, their neighbours, the links between the nodes visited so far, and the links from these nodes to their neighbours),
\end{itemize}
to a node in the neighbourhood to be visited next. Formally, let $\mathcal{I} \subseteq V^{*} \times \mathcal{G}$ be the set of all pairs
$(\bm{z}, G') \in V^{*} \times \mathcal{G}$ such that
\begin{itemize}
\item $\bm{z} = (z_0,\ldots,z_{k-1})$ with $z_0 = s$, $k < n$, and, for all $j \in \{1,\ldots,k-1\}$, $z_j \in \neighb{\{z_0,\ldots,z_{j-1}\}}{G'}$,
\item $G' = \overline{G'}[\bm{z}]$.
\end{itemize}
A \emph{seeking strategy} is a function $r : \mathcal{I} \rightarrow V$ such that, for all $(\bm{z}, G') \in \mathcal{I}$, $r(\bm{z}, G') \in \neighb{\bm{z}}{G'}$.
The set of all seeking strategies is denoted by $R$.

The hiding and the seeking are modelled as a simultaneous move game where the set of strategies of the hider is $\mathcal{G}(V) \times V$
and the set of strategies of the seeker is $R$.
A strategy profile $((G,h),r) \in (\mathcal{G}(V) \times V) \times R$ determines  a sequence of nodes $\seq(G,r) = v_0,\ldots,v_{n-1}$,
called the \emph{seeking sequence induced by $(G,r)$}, which contains all the nodes in $V$ (each exactly once) and starts with $s$. This is the sequence of nodes visited by the seeker when his seeking strategy $e$ is executed on $G$ until all the nodes are visited. The sequence if formally defined as follows:
\begin{itemize}
\item $v_0 = s$,
\item $\forall i \in \{1,\ldots,n-1\}$, 
$v_{i} = r((v_0,\ldots,v_{i-1}), \overline{G}[\{v_0,\ldots,v_{i-1}\}]).$
\end{itemize}
Payoffs to the players are defined as follows. Fix a strategy profile $((G,h),r) \in (\mathcal{G}(V) \times V) \times R$.
The objective of the seeker is to find node $h$ in as little number of steps as possible. The payoff to the seeker is minus the number of steps from the source
to node $h$,
\begin{equation}
\pi^{\skr}(G,h,r) = -\pos(h,\seq(G,r)).
\end{equation}
The hider enjoys benefit $\Varphi(G,h)$ at every step of the seeker. The payoff to the hider is the benefit of the hider, $\Varphi(G,h)$,
times the number of steps of the seeker from the source to node $h$,
\begin{equation}
\pi^{\hdr}(G,h,r) = \pos(h,\seq(G,r)) \Varphi(G,h).
\end{equation}
A mixed strategy $\eta$ of the hider is a probability distribution on $\mathcal{G}(V) \times V$ and a mixed strategy $\sigma$ of the seeker is a probability distribution on $E$.
The expected payoffs from a given mixed strategy profile, $(\eta,\sigma)$ is defined in the standard way:
\begin{align*}
\spayoff(\eta,\sigma) & = \sum_{(G,h) \in \mathcal{G}(V) \times V}\sum_{r \in R} \eta_{G,h} \sigma_{r} \pi^{\skr}(G,h,r)\\
\hpayoff(\eta,\sigma) & = \sum_{(G,h) \in \mathcal{G}(V) \times V}\sum_{r \in R} \eta_{G,h} \sigma_{r} \pi^{\hdr}(G,h,r).
\end{align*}
We are interested in mixed strategy Nash equilibria of the game defined above.

\section{The analysis}

\subsection{Budget $b = n-1$}
We start the analysis by considering the case of the minimal budget allowing for the construction of a connected network over $n$ nodes: $b = n-1$. With this budget the hider is only able to construct trees.

Consider a seeking strategy called the \emph{randomized depth first search (randomized DFS)}.
The strategy, $\sigma^{\dfs} : \mathcal{I} \rightarrow \Delta(V)$, is defined as follows.\footnote{
Given a finite and non-empty set $X$ we use $\Delta(X)$ to denote the set of all probability distributions on $X$.
}
Given $(\bm{z},G') \in \mathcal{I}$ with $\bm{z} = (z_0,\ldots,z_k)$, let
\begin{equation*}
\act^{\dfs}(\bm{z},G') = z_{\max \{i \in \{0,\ldots,k\} : \neighb{z_i}{G'} \setminus \bm{z} \neq \varnothing \}}
\end{equation*}
and
\begin{equation*}
\sigma^{\dfs}(\bm{z},G')(v) = \begin{cases}
    0, & \textrm{\hspace{-7mm} if $v \in \neighb{\act^{\dfs}(\bm{z},G')}{G'} \setminus \bm{z}$}\\                                           
    \frac{1}{|\neighb{\act^{\dfs}(\bm{z},G')}{G'} \setminus \bm{z}|}, & \textrm{otherwise.}
\end{cases}
\end{equation*}
Given a non-empty sequence $\bm{z}$ and a graph $G'$, $\act^{\dfs}(\bm{z},G')$ is the most recently visited node that has an unvisited neighbour in $G'$.
We call this node an \emph{active node in $\bm{z}$ under $G'$}. If $(\bm{z},G') \in \mathcal{I}$ with $\bm{z} = (z_0,\ldots,z_k)$ then, by the definition of $\mathcal{I}$, $G'$ is connected and $\{z_0,\ldots,z_k\} \subsetneq V$. Therefore $\act^{\dfs}(\bm{z},G')$ is well defined and is unique. 
The strategy $\sigma^{\dfs}$ picks, uniformly at random, one of the unvisited neighbours of the active node determined by $\act^{\dfs}$.

Given a set of $n$ nodes, $V$, such that $s \in V$, a \emph{palm-tree graph} of height $d$ is an undirected graph $\langle V,E\rangle$ such that there exists an ordering of nodes in $V$, $v_0,\ldots,v_{n-1}$ such that:
\begin{itemize}
\item $v_0 = s$,
\item $E = \{v_0 v_1,\ldots,v_{d-2} v_{d-1}\} \cup \{v_{d-1} v_d,\ldots, v_{d-1} v_{n-1}\}$,
\end{itemize}
c.f. Figure~\ref{fig:palmtree} for an illustration. We will refer to the nodes $\{v_d,\ldots,$ $v_{n-1}\}$ as the \emph{crown} of the palm tree and to the nodes $\{v_0,\ldots,v_{d-1}\}$ as the \emph{trunk} of the palm tree.

\begin{figure}[h]
\begin{center}
  \includegraphics[scale=0.75]{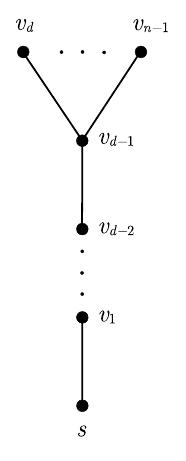}
\end{center}
  \caption{Palm-tree graph.}
  \label{fig:palmtree}
\end{figure}
A mixed hiding strategy where the hider chooses a palm tree of height $d$ and picks one of the nodes of its crown uniformly at random is called a \emph{mixed hiding in a crown of a palm tree of height $d$}.

We now state and prove the equilibrium result for $b = n-1$. 
Let 
\begin{equation*}
D^{\star} = \arg\max_{d \in \{0,\ldots,n-1\}} A(d)\left(\frac{n+d-1}{2}\right).
\end{equation*}

\begin{theorem}
\label{th:budget:1}
If $b = n-1$ then any strategy profile $(\eta,\sigma)$ where $\eta$ is a mixed hiding in a crown of a palm tree of height $d^{\star} \in D^{\star}$ and $\sigma$ is a random DFS is a Nash equilibrium. 

All Nash equilibria yield the same payoff to the hider and strategy $\eta$ guarantees this equilibrium payoff to the hider against any strategy of the seeker. Moreover, strategy $\sigma$ is a best response to any equilibrium strategy of the hider. The equilibrium payoff to the seeker is unique if and only if $|D^{\star}| = 1$.
\end{theorem}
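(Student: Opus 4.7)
My plan is to prove the theorem in four stages, the first of which is the key technical lemma.

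The first stage is to establish a closed-form formula for the expected position of the hider under randomized DFS on a tree. Writing $d = d_G(s,h)$, rooting $G$ at $s$, and letting $T_h$ denote the subtree rooted at $h$, I would show
\begin{equation*}
\Ex_\sigma[\pos(h, \seq(G, r))] = \frac{n + d - |T_h|}{2}.
\end{equation*}
The proof walks along the unique path $c_0 = s, c_1, \ldots, c_d = h$. Because $\sigma^{\dfs}$ selects children of the active node uniformly at random and fully explores a chosen sibling subtree before backtracking, each sibling of $c_i$ is visited before $c_i$ with probability $1/2$. Hence $\Ex[\pos(c_i) - \pos(c_{i-1})] = 1 + (|T_{c_{i-1}}| - 1 - |T_{c_i}|)/2$, and the sum telescopes. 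Two immediate consequences are $A(d)\,\Ex_\sigma[\pos(h,\seq(G,r))] \leq A(d)(n+d-1)/2 \leq V^{\star}$, where $V^{\star} = A(d^{\star})(n+d^{\star}-1)/2$, with both inequalities tight iff $h$ is a leaf and $d \in D^{\star}$.

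The second stage is to verify $(\eta, \sigma)$ is a Nash equilibrium. That $\eta$ is a best response to $\sigma$ follows from Stage~1 since $\eta$ is supported on leaves at distance $d^{\star}$. For the other side, I would observe that on a palm tree of height $d^{\star}$ every node on the trunk has a unique unvisited neighbour, so any seeker strategy traverses the trunk $v_0,\ldots,v_{d^{\star}-1}$ in the first $d^{\star}$ steps, and the crown then occupies positions $d^{\star},\ldots,n-1$ in some permutation. Since $\eta$ is uniform over the crown, the expected position of $h$ equals $(d^{\star} + n - 1)/2$ regardless of the seeker's order, so $\hpayoff(\eta,\sigma') = V^{\star}$ for every $\sigma'$. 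In particular $\sigma$ (and any other strategy) is a best response to $\eta$.

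The third stage, which is the heart of the argument, is uniqueness of the hider's equilibrium payoff. Fix any NE $(\eta^{*}, \sigma^{*})$ and let $V_{\mathrm{NE}} = \hpayoff(\eta^{*}, \sigma^{*})$. The lower bound $V_{\mathrm{NE}} \geq V^{\star}$ is immediate: the hider can deviate to $\eta$ and get $V^{\star}$ against every seeker strategy. For the matching upper bound, equilibrium indifference on $\mathrm{supp}(\eta^{*})$ gives $\Ex_{\sigma^{*}}[\pos(h,\seq(G,r))] = V_{\mathrm{NE}}/A(d)$, hence $\Ex_{\eta^{*},\sigma^{*}}[\pos] = V_{\mathrm{NE}} \cdot \Ex_{\eta^{*}}[1/A(d)]$. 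On the other hand, the seeker can deviate to $\sigma$ and the Stage~1 formula gives
\begin{equation*}
\Ex_{\eta^{*},\sigma^{*}}[\pos] \;\leq\; \Ex_{\eta^{*},\sigma}[\pos] \;=\; \Ex_{\eta^{*}}\!\left[\tfrac{n+d-|T_h|}{2}\right] \;\leq\; V^{\star} \cdot \Ex_{\eta^{*}}[1/A(d)],
\end{equation*}
where the last step uses the pointwise bound $A(d)(n+d-|T_h|)/2 \leq V^{\star}$. Dividing through by the strictly positive $\Ex_{\eta^{*}}[1/A(d)]$ yields $V_{\mathrm{NE}} \leq V^{\star}$.

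The fourth stage extracts the remaining claims from the tightness forced by $V_{\mathrm{NE}} = V^{\star}$. The equality $\Ex_{\eta^{*},\sigma^{*}}[\pos] = \Ex_{\eta^{*},\sigma}[\pos]$ shows $\sigma$ is also a best response to $\eta^{*}$. The pointwise equality $A(d)(n+d-|T_h|)/2 = V^{\star}$ on $\mathrm{supp}(\eta^{*})$ forces $h$ to be a leaf and $d \in D^{\star}$ for every supported $(G,h)$. When $|D^{\star}| = 1$, this pins $d \equiv d^{\star}$, and the seeker's equilibrium expected position collapses to $(n+d^{\star}-1)/2$; when $|D^{\star}| \geq 2$, mixing palm trees with crown-uniform hiding across different heights $d \in D^{\star}$ produces a family of Nash equilibria with distinct seeker payoffs. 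The only genuinely hard step is the Stage~1 telescoping identity; once that is in hand, the NE verification and the uniqueness chain reduce to elementary manipulations of conditional expectations.
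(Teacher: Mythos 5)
Your proposal is correct and follows the same overall skeleton as the paper: the same two key facts (a closed-form expected position of the hider under randomized DFS on a tree, with your $|T_h|$ playing the role of the paper's $m$, and the palm-tree guarantee against arbitrary seeking strategies), the same equilibrium verification, and the same $|D^{\star}|$ dichotomy. You differ in two sub-steps. For the DFS formula, you telescope the increments $\pos(c_i,\cdot)-\pos(c_{i-1},\cdot)$ along the $s$--$h$ path, whereas the paper writes the position as $\sum_v X_{vh}$ and shows $\Ex(X_{vh})=1/2$ for every off-path node $v$; both are linearity-of-expectation arguments and yours is equally valid (the paper's pairwise version is the one it later reuses and extends in the cycle case). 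For uniqueness of the hider's payoff, you sandwich $V_{NE}\,\Ex_{\eta^{*}}[1/A(d)]$ between $\Ex_{\eta^{*},\sigma^{*}}[\pos]$ and $V^{\star}\,\Ex_{\eta^{*}}[1/A(d)]$ in a single chain, which also hands you the pointwise tightness ($h$ a leaf and $d\in D^{\star}$ on the support) needed for the seeker-payoff claim; the paper instead argues element-by-element, showing by a two-sided deviation argument that $\sigma'$ takes exactly $L(d\mid n)$ steps on every supported $(G',h')$. Your packaging is slightly slicker; the paper's makes the structural fact about equilibrium supports explicit. One small point you should make explicit in Stage~3 is that $A(d)>0$ for every $(G,h)$ in the support of $\eta^{*}$ (otherwise the division by $A(d)$ in the indifference condition is ill-defined); this follows because a supported pure strategy with $A(d)=0$ would yield payoff $0<V^{\star}\le V_{NE}$, contradicting indifference.
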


The following two lemmas are key for proving the theorem.~\footnote{The expected number of steps to reach the hiding node $t$ by the randomized DFS strategy, stated in Lemma~\ref{lemma:nosteps}, was characterized before by~\cite{AlpernFokkink2013}. We provide a different proof and we build on and expand this proof when analyzing the search strategy that we propose in Section~\ref{sec:cycles} for the case where the hider's budget allows for the construction of a cycle}

\begin{lemma}
\label{lemma:nosteps}
Let $G = \langle V, E \rangle$ be a tree and let $t \in V$ be a node in $G$. Then the expected number of steps to reach $t$ from any node $s\in V$ in $G$ using randomized DFS strategy starting at $s$ is equal to:
\begin{equation}
\frac{|V| + \dist{G}{s}{t}-m}{2},
\end{equation}
where $m$ is the number of nodes reachable from $s$ in $G$ through a path containing $t$.
\end{lemma}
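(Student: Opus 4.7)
The plan is to exploit tree structure by rooting $G$ at $s$ and decomposing the expected hitting time along the unique path from $s$ to $t$. Let $s = u_0, u_1, \ldots, u_d = t$ be the unique path from $s$ to $t$ in $G$, with $d = \dist{G}{s}{t}$. Rooting at $s$, every node $u_i$ with $i < d$ has $u_{i+1}$ as a child, together with some number of other children whose subtrees do not contain $t$; call these the \emph{sibling subtrees} at $u_i$ and let $S_i$ denote the total number of nodes in them. By counting all nodes, the path contributes $d+1$, the proper descendants of $t$ contribute $m-1$, and the sibling subtrees contribute the rest, giving the identity $\sum_{i=0}^{d-1} S_i = |V| - d - m$, which I plan to use at the end.

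The key structural observation about $\sigma^{\dfs}$ on a tree is that whenever the DFS arrives at $u_i$ for the first time, it will visit the children of $u_i$ (in the rooting at $s$) in a uniformly random permutation, and between two consecutive children at $u_i$ it completely finishes exploring the first one's subtree before returning to $u_i$. This follows from the definition of $\sigma^{\dfs}$: upon first visiting $u_i$, $u_i$ becomes the active node and one of its unvisited neighbours is chosen uniformly at random; in a tree, this child $w$ has no edges back to already visited parts, so the DFS must exhaust the subtree rooted at $w$ before $u_i$ can regain the active status; then a uniformly random choice among the remaining unvisited children is made, and inductively the visiting order is a uniformly random permutation.

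Now I compute the expected number of steps from the moment DFS is at $u_i$ to the moment DFS first visits $u_{i+1}$. Let $c_i$ denote the number of children of $u_i$. Conditional on $u_{i+1}$ being in position $j$ of the random permutation (so $j$ is uniform on $\{1,\ldots,c_i\}$), the DFS first fully explores $j-1$ uniformly chosen sibling subtrees and then takes one step from $u_i$ to $u_{i+1}$; each node in an explored sibling subtree contributes exactly one step. The expected contribution from the sibling subtrees is therefore $\frac{j-1}{c_i-1}\, S_i$, and averaging $j$ uniformly gives
\begin{equation*}
\frac{1}{c_i}\sum_{j=1}^{c_i}\left(\frac{j-1}{c_i-1}\, S_i + 1\right) = \frac{S_i}{2}+1.
\end{equation*}
Summing over $i = 0, \ldots, d-1$ yields $d + \tfrac{1}{2}\sum_{i=0}^{d-1} S_i = d + \tfrac{1}{2}(|V|-d-m) = \tfrac{|V| + d - m}{2}$, which is the claimed formula.

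The main obstacle is to make the ``uniformly random permutation'' statement precise. One way I would handle this is by induction on $|V|$: show that for any rooted tree with root $u$, the marginal distribution over the order in which the children of $u$ are first visited under $\sigma^{\dfs}$ is uniform over permutations, and that conditional on this order the subtree explorations are independent DFS-runs (each of which visits every node of its subtree, so contributes its subtree size in steps regardless of the internal random choices). The tree structure is essential here, since in a tree a sibling subtree can only be entered through its root-child of $u_i$, and once entered cannot be left until exhausted; the rest of the argument is a routine linearity-of-expectation calculation.
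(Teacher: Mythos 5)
Your proof is correct, and it reaches the formula by a genuinely different decomposition from the paper's. The paper writes the hitting time as $\sum_{v} X_{vt}$, where $X_{vt}$ indicates that $v$ is visited before $t$, shows $\Ex(X_{vt})=1$ for nodes on the $s$--$t$ path and $\Ex(X_{vt})=1/2$ for every off-path node $v$ by a local symmetry argument at the unique branch point where the paths to $v$ and to $t$ diverge, and then handles a non-leaf $t$ by a separate reduction (deleting the nodes below $t$ and applying the leaf case to the smaller tree). You instead telescope the hitting time along the path $u_0,\ldots,u_d$ and compute each segment's expected length from the uniformly random order in which the children of $u_i$ are explored, obtaining $S_i/2+1$ per segment. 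Both arguments are linearity-of-expectation computations resting on the same probabilistic fact --- a sibling subtree at $u_i$ is exhausted before $u_{i+1}$ is entered with probability exactly $1/2$ --- but your organization buys two things: the non-leaf case needs no separate treatment (the descendants of $t$ simply never appear in any $S_i$, so the $-m$ falls out of the node count), and the random-permutation structure of child exploration is made explicit rather than invoked implicitly. The one step you flag as needing care --- that the children of $u_i$ are first visited in a uniformly random order and that each sibling subtree, once entered, is exhausted before control returns to $u_i$ --- is asserted at the same level of detail in the paper's own proof (``the probability that $u_3$ is visited before $u_2$ is $1/2$''), so your proposal is not less rigorous; the induction you sketch would close it cleanly.
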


\begin{proof}
Take any connected graph $G$ that is a tree with starting node $s \in V$ and any $t \in V$. Suppose first that $t$ is a leaf in $G$ (since $G$ is a tree so there is a unique path from $s$ to $t$ in $G$). 
Let $X_t$ be a random variable whose realization is the number of steps made by randomized DFS starting at $s$ until $t$ is reached. 
For any pair of nodes $u,v \in V(G)$, $v \neq u$, let $X_{uv}$ be a random variable such that 
\begin{equation*}
X_{uv} = \begin{cases}
         1, & \textrm{if $u$ is visited before $v$ under $\sigma$}\\
         0, & \textrm{otherwise.}
         \end{cases}
\end{equation*}
Then $X_t = \sum_{v \in V} X_{vt}$ and $\Ex(X_t) = \sum_{v \in V} \Ex(X_{vt})$. Clearly for any $v \in P_G(s,t) \setminus \{t\}$, $\Ex(X_{vt}) = 1$.
We will show that for any $v \in V \setminus P_G(s,t)$, $\Ex(X_{vt}) = 1/2$. Take any such $v$ and let $\{u_1,u_2,u_3\} \subseteq V$ be nodes such that
$\{u_1 u_2,u_1 u_3\} \subseteq E(G)$, $\{u_1,u_2\} \subseteq P_G(s,t)$, and $\{u_1,u_3\} \subseteq P_G(s,v)$ (c.f. Figure~\ref{fig:nosteps} for an illustration).
Since $G$ is a tree, $v$ is not on the path from $s$ to $t$ in $G$ and $t$ is a leaf in $G$, such nodes $u_1$, $u_2$, and $u_3$ exist and are unique.
\begin{figure}[h]
\begin{center}
  \includegraphics[scale=0.75]{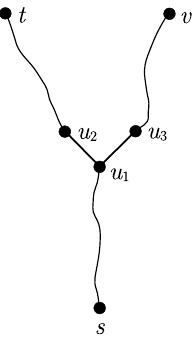}
\end{center}
\caption{Paths from $s$ to $t$ and from $s$ to $v$.}
  \label{fig:nosteps}
\end{figure}

Node $v$ is visited before $t$ if and only if node $u_3$ is visited before $u_2$. Nodes $u_2$ and $u_3$ are visited if an only if $u_1$ is an active node and the probability that $u_3$ is visited before $u_2$ is $1/2$. Hence $\Ex(X_{vt}) = 1/2$.
Thus
\begin{align*}
\Ex(X_t) & = \sum_{v \in V} \Ex(X_{vt}) = |P_G(s,t)|-1 + \frac{|V|-|P_G(s,t)|}{2} \\
         & = \frac{|V|+|P_G(s,t)|-2}{2}
         = \frac{|V|+\dist{G}{s}{t}-1}{2}.
\end{align*}
Since $t$ is a leaf, the number of nodes reachable from $s$ by a path containing $t$ is equal to $1$.

Suppose that $t$ is not a leaf. Then the nodes which are reachable from $s$ via $t$ only will never be visited before $v$ is visited. Suppose that the set of such nodes (excluding $t$) in $G'$ is $U$. Then the expected number of steps to visit $t$ in $G$ by a randomized DFS is equal to the expected number of steps needed in $G[V\setminus U]$, the graph obtained from $G$ by removing all the nodes in $U$. Since $t$ is a leaf in $G[V\setminus U]$, by what we have shown above, the expected number of steps to reach $t$ is equal to $(|V| + \dist{G}{s}{t} - |U| - 1)/2$. Since $m = |U|+1$,
this completes the proof.
\end{proof}

\begin{lemma}
\label{lemma:palmtree}
Let $\beta^d$ be a mixed hiding in a crown of a palm tree of height $d$. Then for any mixed seeking strategy $\alpha$ the expected number of steps to find the hiding place is equal to $(n+d-1)/2$.
\end{lemma}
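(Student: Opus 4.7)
The plan is to exploit the extreme rigidity of the palm tree: the seeker has no genuine choice of path until the trunk has been completely traversed, and once the crown is reached the uniform mixing of the hider makes the expected payoff permutation-invariant.

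First, I would verify that every seeking strategy is forced to walk down the trunk in order. On a palm tree $G$ of height $d$, for each $i < d-1$ the set $\{v_0,\ldots,v_i\}$ has the unique open neighbourhood $\{v_{i+1}\}$ in $G$, so by the definition of a seeking strategy the next node to be visited must be $v_{i+1}$. Consequently, for every realization of any mixed strategy $\alpha$, the seeking sequence $\seq(G,r)$ begins with exactly $v_0, v_1, \ldots, v_{d-1}$, independently of $r$.

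Next, I would observe that after visiting $v_{d-1}$ the frontier is exactly the crown $\{v_d,\ldots,v_{n-1}\}$, and every crown node is a leaf. So $r$ visits these $n-d$ nodes in some (possibly random) order, and they fill the final $n-d$ slots of $\seq(G,r)$. By the indexing convention under which Lemma~\ref{lemma:nosteps} states that the number of steps to reach a node equals its (0-indexed) position, the multiset of step counts assigned to crown nodes is always $\{d, d+1, \ldots, n-1\}$, independent of the realized permutation.

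Finally, since $\beta^d$ hides at each crown node with probability $1/(n-d)$, linearity of expectation yields
\begin{equation*}
\Ex_{\alpha,\beta^d}\bigl[\pos(h, \seq(G,r))\bigr] = \frac{1}{n-d} \sum_{i=d}^{n-1} i = \frac{n+d-1}{2},
\end{equation*}
which matches the claim. There is no real obstacle here; the argument is essentially that a uniform distribution over a set whose position-labels form a fixed block yields a permutation-invariant average. The only care needed is to confirm the forced traversal of the trunk from the single-path structure of the trunk plus the definition of a valid seeking strategy, and to align the indexing of $\pos$ with the ``number of steps'' terminology of Lemma~\ref{lemma:nosteps}.
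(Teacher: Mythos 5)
Your proof is correct and follows essentially the same route as the paper's: the trunk forces the first $d-1$ steps of any seeking strategy, and uniform hiding over the crown makes the expected position permutation-invariant, giving $d-1+(n-d+1)/2=(n+d-1)/2$. You simply spell out the forced-traversal and fixed-block observations in slightly more detail than the paper does.
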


\begin{proof}
If the hider uses strategy $\beta_d$, any seeking strategy makes $d-1$ steps until it reaches node $v_{d-1}$.
Then, since the hider chooses one of the nodes in $C = \{v_{d^{\star}},\ldots,v_n\}$ as the hiding place uniformly at random, for any sequence of nodes in $C$ the expected number of steps to find the hiding place is equal to $(n-d+1)/2$. Therefore the expected number of steps to reach the hiding place for any seeking strategy is equal to $d-1 + (n-d+1)/2 = (n+d-1)/2$. 
\end{proof}

We are now ready to prove Theorem~\ref{th:budget:1}.

\begin{proof}[Proof of Theorem~\ref{th:budget:1}]
Suppose that $b = n-1$ and let $L(d\mid n) = (n+d-1)/2$.

We first show that $(\eta,\sigma)$ is a Nash equilibrium.
Consider strategy profile $(\eta,\sigma)$, as defined in the theorem. We will show that there is no profitable deviation for any
of the players from this strategy profile. By Lemma~\ref{lemma:palmtree}, if the hider uses strategy $\eta$ then the expected number of steps to reach the hiding place any seeking strategy makes is equal to $L(d^{\star} \mid n)$. Thus, there is no profitable deviation for the seeker against $\eta$.

Consider now any pure strategy $(G',h')$ of the hider. If $G'$ is a palm tree of height $d$ and $h'$ is a node in the crown of $G'$ then the expected payoff to the hider against $\sigma$ is equal to $A(d)L(d \mid n)$. By the definition of $d^{\star}$, $\eta$ yields at least as good of an expected payoff against $\sigma$ as $(G',h')$.
Suppose that $h'$ is a leaf of $G'$ and $G'$ is not a palm tree. Let $d = \dist{G'}{s}{h'}$. By Lemma~\ref{lemma:nosteps}, the expected number of steps to visit $h'$ by $\sigma$ is equal to $L(d)$ and the expected payoff to the hider is equal to $A(d)L(d \mid n)$. By the definition of $d^{\star}$ this payoff is not better
than the payoff the hider gets from $\eta$. Lastly, suppose that $h'$ is not a leaf. Then, by Lemma~\ref{lemma:nosteps}, the expected payoff to the hider from $(G',h')$ is $L(d \mid n-q)$, where $q$ is the number of nodes, other than $h'$, reachable from $s$ in $G'$ by a path containing $h'$. Since $L(d \mid n-q)$ is less than $L(d \mid n)$, by the definition of $d^{\star}$,
$\eta$ yields a higher payoff to the hider than $(G',h')$ against $\sigma$. This shows that all pure strategies of the hider yield at most as high of a payoff against $\sigma$ as $\eta$. Thus this is true for any mixed strategy of the hider and so there is no profitable deviation to the hider from $(\eta,\sigma)$.

Second, we show that all Nash equilibria are payoff equivalent for the hider
To this end we first show, for any Nash equilibrium $(\eta',\sigma')$ and any pair of graph and a hiding node $(G',h')$ in the support of $\eta'$, that the expected number of steps $\sigma'$ makes until $h'$ is reached in $G'$ is equal to $L(d \mid n)$, where $d = \dist{G'}{s}{h'}$.
Assume otherwise and suppose that there exists $(G',h')$ in the support of $\eta'$ such that the expected number of steps $\sigma'$ makes until $h'$ is reached in $G'$ is less than $L(d \mid n)$. 
Let $\beta^d$ be a mixed hiding in a crown of a palm tree of height $d$.
Then, by Lemma~\ref{lemma:palmtree}, the hider would strictly improve her payoff by deviating to the mixed strategy $\sigma''$ under which $(G',h')$ is played with probability $0$ and instead $\beta_d$ is played with probability $\sigma'_{(G',h')}$. This contradicts the assumption that $(\eta',\sigma')$ is a Nash equilibrium. Thus for all $(G',h')$ in the support of $\eta'$, the expected number of steps $\sigma'$ makes until $h'$ is reached in $G'$ is greater or equal to $L(d \mid n)$. Suppose there is $(G',h')$ in the support of $\eta'$ for which it is strictly greater. Then, by Lemma~\ref{lemma:nosteps}, the seeker would strictly benefit from using $\sigma$ instead of $\sigma'$. Again, a contradiction with the assumption that $(\eta',\sigma')$ is a Nash equilibrium. Hence for any $(G',h')$ in the support of $\eta'$, the expected number of steps $\sigma'$ makes until $h'$ is reached in $G'$ is equal to $L(d \mid n)$. 
Now assume, to the contrary, that Nash equilibria are not payoff equivalent to the hider. Then there exists a Nash equilibrium $(\eta',\sigma')$ such that $\hpayoff(\eta',\sigma') \neq \hpayoff(\eta,\sigma)$. If $\hpayoff(\eta',\sigma') < \hpayoff(\eta,\sigma)$ then, by Lemma~\ref{lemma:palmtree} and optimality of $d^{\star}$, the hider would strictly benefit from using $\eta$ instead of $\eta'$ against $\sigma'$. Hence it must be that $\hpayoff(\eta',\sigma') > \hpayoff(\eta,\sigma)$. Then there exists $(G',h')$ in the support of $\eta'$ which yields higher payoff to the hider than $A(d^{\star})L(d^{\star} \mid n)$. As we established above, the expected number of steps $\sigma'$ makes until $h'$ is reached in $G'$ is equal to $L(d \mid n)$, where $d = \dist{G'}{s}{h'}$. Hence the payoff to the hider from $(G',h')$ is $A(d)L(d \mid n)$ and, by optimality of $d^{\star}$ it cannot be strictly greater than $A(d^{\star})L(d^{\star} \mid n)$, a contradiction. Therefore all equilibria must be payoff equivalent to the hider.
Moreover, by Lemma~\ref{lemma:palmtree}, $\eta$ yields the equilibrium payoff to the hider against any seeking strategy.

Third, we show that strategy $\sigma$ is a best response to any equilibrium strategy of the hider. Since, as we showed above, for any Nash equilibrium $(\eta',\sigma')$ and any pair of a graph and a hiding node $(G',h')$ in the support of $\eta'$, that the expected number of steps $\sigma'$ makes until $h'$ is reached in $G'$ is equal to $L(d \mid n)$, where $d = \dist{G'}{s}{h'}$, so $\spayoff(\eta',\sigma') = \spayoff(\eta',\sigma)$. Since $\sigma'$ is a best response to $\eta'$ so is $\sigma$ as well.

Fourth, we show that equilibrium payoff to the seeker is unique of and only if $|D^{\star}| = 1$. If $|D^{\star}| = 1$ that the claim follows because, as we showed above, $\sigma$ is a best response to any equilibrium strategy of the hider and the payoff to the seeker from any equilibrium $(\eta',\sigma')$, $\spayoff(\eta',\sigma') = \hpayoff(\eta',\sigma')/A(d^{\star})$, where $\{d^{\star}\} = D^{\star}$. Since $\hpayoff(\eta',\sigma')$ is unique (as we showe above) and $d^{\star}$ is unique, so is $\spayoff(\eta',\sigma')$. Now, suppose that $|D^{\star}| > 1$. Then there exist $d^{\star} \in D^{\star}$ and $d^{\star\star} \in D^{\star}$ such that $d^{\star} < d^{\star\star}$. As we have shown above, both $(\eta,\sigma)$, with $\sigma = \beta_{d^{\star}}$, and $(\eta',\sigma)$, with $\eta' = \beta_{d^{\star\star}}$, are Nash equilibria of the game. Moreover, $\spayoff(\eta,\sigma) = L(d^{\star}) < L(d^{\star\star}) = \spayoff(\eta',\sigma)$.
\end{proof}

\subsection{Budget $b = n$}
\label{sec:cycles}
With budget $b = n$ the hider can build a network that contains a cycle. This possibility makes the search problem considerably more challenging. 

Since the problem is significantly more challenging than in the case of $b = n-1$, to make progress, we assume a particular form of payoffs from hiding at a given distance, $A$. For a fixed $d > 0$, let
\begin{equation*} 
A(x) = \begin{cases}
       1, & \textrm{if $x \leq d$},\\
       0, & \textrm{otherwise.}
       \end{cases}
\end{equation*}
Thus the hider gets payoff $1$ if the target node is at distance at most $d$ from the source and payoff $0$ otherwise.
Notice that this makes the game zero-sum.
%
In particular, the hider
can create networks on which the randomized DFS makes more than $2n/3$ steps to find the target, in expectation. We illustrate this possibility with the following example.

\begin{example}
\label{ex:1}
Consider a network over $n$ nodes consisting of a line of $d + 1$ nodes, starting at node $s$, and a cycle over $n-d$ nodes, including $s$ (c.f. Figure~\ref{fig:example1}). Consider the randomized DFS seeking strategy starting at node $s$. Every node on the line from $s$ to $t$ is visited with probability $1$ until $t$ is visited. In addition, every node in the cycle (except $s$) is visited with probability $2/3$ before $t$ is visited. Hence $2/3(n+d/2-1)$ nodes will be visited, in expectation, until $t$ is visited.

\begin{figure}[h]
\begin{center}
  \includegraphics[scale=0.75]{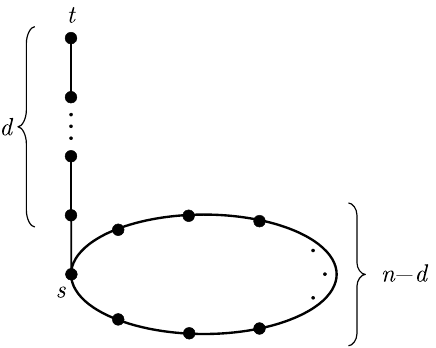}
\end{center}
\caption{A graph over $n$ nodes that requires $2/3(n+d/2-1)$ steps for the randomized DFS to reach node $t$ starting at node $s$.}
  \label{fig:example1}
\end{figure}
\xqed
\end{example}

As illustrated by the example, the hider can put a large fraction of the nodes in the cycle and hide the treasure outside the cycle, which makes the seeker visit the nodes in the cycle with high probability before the hiding node outside the cycle is visited. Since the payoff to the hider is decreasing in the distance from the starting node to the hiding node, it is not beneficial to the hider to hide further than a distance determined by benefit from hiding and the seeking strategy. Hence, the seeker could adjust the seeking strategy so that all the nodes which are at certain distance, $d$, from the starting node are visited before the nodes which are further than $d$ from the starting node. 
This motivates a definition of a seeking strategy called the \emph{$d$-bounded randomized DFS}, denoted by $\dfs_d$,
where $d \geq 1$ is a parameter of the strategy.

Informally, the strategy is described as follows. On a tree, the $\dfs_d$ seeking strategy visits all the nodes at distance at most $d$ from the source first, proceeding analogously to the $\dfs$ seeking strategy. After all such nodes are visited, the seeking strategy visits all the remaining nodes following the $\dfs$ strategy.
On graphs containing a cycle, the description of the strategy is more complicated because after the cycle is discovered (i.e. the subgraph of the explored graph restricted to the set of visited nodes contains a cycle), the distances from the source to the nodes in the cycle need to be updated. Some nodes, where visiting was postponed due to their distance from the source being greater than $d$, may turn out to be at distance at most $d$ from the source when the second path to them (discovered together with discovering the cycle) is discovered. These nodes (and the nodes reachable through them from the sources) are visited before the remaining nodes as soon as the cycle is discovered. Moreover, 
they are visited in the ``first seen first visited'' order rather than ``last seen first visited'' order used by the standard $\dfs$ strategy. 

Before providing a formal definition of the seeking strategy we need to introduce some notions and notation. 
Given a graph $G$, a set of nodes $\{v_1,\ldots,v_m\}$ such that $v_m v_1 \in E(G)$ and, for all $i \in \{2,\ldots,m\}$, $v_{i-1} v_i \in E(G)$ is a \emph{cycle} in $G$. The set of all cycles in $G$ is denoted by $\cycles(G)$.
Given $d \geq 1$, $i \geq 1$, graph $G$, and node $s \in V(G)$, let
\begin{equation*}
R^i_d(G,s) = \{v \in V(G) : \textrm{there is exactly $i$ paths of length $\leq d$ from $s$ to $v$}\}
\end{equation*}
be the set of nodes in network $G$ which are reachable by exactly $i$ paths of length at most $d$ from $s$ in $G$.
Let $R_d(G,s) = \bigcup_{i \geq 1} R^i_d(G,s)$ be the set of all the nodes at distance at most $d$ from $s$ in $G$.

Like in the $\dfs$ seeking strategy, the nodes to be visited next are picked uniformly at random from the set of unvisited neighbours of the active node. The main difference between the $\dfs_d$ and $\dfs$ seeking strategies is in how the active node is selected.
The \emph{$d$-bounded DFS} seeking strategy, $\sigma^\mathrm{\dfs_d} : \mathcal{I} \rightarrow \Delta(V)$, is formally defined as follows. Given an integer bound $d \geq 0$, for any $(\bm{z},G') \in \mathcal{I}$ with $\bm{z} = (z_0,\ldots,z_k)$, $\act^{\dfs_d}(\bm{z},G') = z_j$, where $j$ is selected as follows

\begin{algorithm}
\DontPrintSemicolon
\SetAlgoNoEnd

    \uIf{$\cycles(G'[\bm{z}]) \neq \varnothing$ and $\neighb{\bm{z}}{G'} \cap R^1_d(G',s) \cap \left(R^2_n(G',s) \setminus R^2_{d+1}\right)  \neq \varnothing$}{
         $j = \max \Big\{i \in \{0,\ldots,k\} : (\neighb{z_i}{G'}\setminus \bm{z}) \cap R^1_d(G',s) \cap\allowbreak \left(R^2_n(G',s) \setminus R^2_{d+1}\right) \neq \varnothing \Big\}$\;
    }
    \uElseIf{$\cycles(G'[\bm{z}]) \neq \varnothing$ and $\neighb{\bm{z}}{G'} \cap R^1_d(G',s) \cap \left(R^2_{d+1}(G',s)\setminus R^2_{d}(G',s)\right)  \neq \varnothing$}{
        $j = \min \Big\{i \in \{0,\ldots,k\} : (\neighb{z_i}{G'}\setminus \bm{z}) \cap R^1_d(G',s) \cap\allowbreak \left(R^2_{d+1}(G',s)\setminus R^2_{d}(G',s)\right) \neq \varnothing \Big\}$\;
    }
    \uElseIf{$\neighb{\bm{z}}{G'}\cap R_d(G',s) \neq \varnothing$}{
        $j = \max \{i \in \{0,\ldots,k\} : (\neighb{z_i}{G'}\setminus \bm{z}) \cap  R_d(G',s) \neq \varnothing \}$\;
    }
    \uElse{
        $j = \max \{i \in \{0,\ldots,k\} : \neighb{z_i}{G'}\setminus \bm{z} \neq \varnothing \}$\;
    }
\end{algorithm}
\noindent and
\begin{align*}
& \sigma^{\dfs}_{d}(\bm{z},G')(v) = {} \\
& \quad \begin{cases}
\frac{1}{|(\neighb{\act^{\dfs}_{d}(\bm{z},G')}{G'} \setminus \bm{z}) \cap R_d(G',s)|},\\ 
& \hspace{-37mm} \textrm{if $v \in \neighb{\act^{\dfs}_{d}(\bm{z},G')}{G'}\setminus \bm{z}) \cap R_d(G',s)$}\\
\frac{1}{|(\neighb{\act^{\dfs}_{d}(\bm{z},G')}{G'} \setminus (\bm{z} \cup R_d(G',s)))|},\\
& \hspace{-37mm}\parbox[t]{7cm}{if $v \in \neighb{\act^{\dfs}_{d}(\bm{z},G')}{G'}\setminus (\bm{z} \cup R_d(G',s))$ and $\neighb{\act^{\dfs}_{d}(\bm{z},G')}{G'}\setminus \bm{z}) \cap R_d(G',s) = \varnothing$,}\\
                                           0, & \hspace{-37mm}\textrm{otherwise.}
                                           \end{cases}
\end{align*}

The first two cases of the active node selection function apply only when the graph over the visited nodes contains a cycle. Notice that the nodes, whose visiting was postponed due the fact that the distance from the source is to too high, and which are visited after the cycle is discovered, are the nodes reachable by exactly one path of length at most $d$ and by exactly two paths of length at most $d+1$. The second case of the active node selection function applies to them.
The first case applies to the nodes which are reachable by exactly one path of length at most $d$ and by exactly two paths of length $d+2$ or more. These are the nodes reachable via the postponed nodes mentioned above and they are visited in the ``last seen first visited order'' of the standard $\dfs$ strategy.

The randomized DFS bounded by $d$ visits $2d$ nodes in expectation until node $t$ is visited starting from node $s$
in the network in Figure~\ref{fig:example1}. However, the hider can still make the seeker visit more than $2n/3$ nodes in expectation by having many nodes reachable by two paths from the sources and at distance $d$ from the source. This is illustrated with the following example.

\begin{example}
\label{ex:2}
Consider a network over $n$ nodes consisting of a line of $d + 1$ nodes, starting at node $s$, and a cycle over $2d-2$ nodes, including $s$, and $n-3d+2$ nodes not in the cycle, reachable by two paths of length $d$ from $s$ (c.f. Figure~\ref{fig:example2}). Similarly to the network in Example~\ref{ex:1}, the randomized DFS seeking strategy starting at node $s$ visits $2/3(n+d/2-1)$ nodes in expectation, until $t$ is visited.
The randomized DFS bounded by $d$ visits $2/3(n+1/2)$ nodes in expectation until node $t$ is visited. This is because every node on the line from $s$ to $t$, is visited with probability $1$, every node at distance $d$ from $s$ reachable by two paths from $s$ is visited with probability $2/3$ before $t$ is visited (and there are $n-3d+5$ such nodes, $3$ in the cycle and $n-3d+2$ outside the cycle). Each of the remaining $2(d-3)$ nodes (each of them in the cycle) is visited with probability $1/2$ before $t$ is visited.

\begin{figure}[h]
\begin{center}
  \includegraphics[scale=0.75]{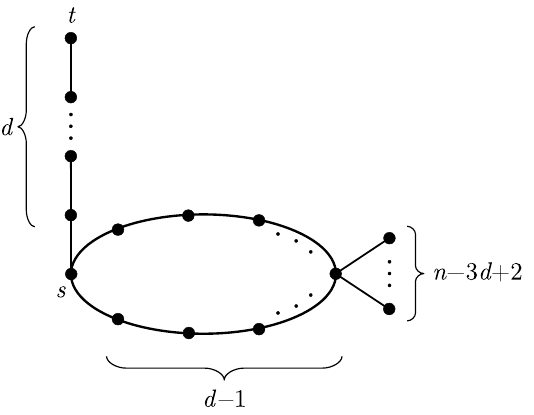}
\end{center}
\caption{A graph over $n$ nodes that requires $2/3(n+d/2-1)$ steps for the randomized DFS and $2/3(n+1/2)$ steps for the randomized DFS bounded by $d$ to reach node $t$ starting at node $s$.}
  \label{fig:example2}
\end{figure}
\xqed
\end{example}

Large expected number of visited nodes until the hiding node is visited results from the fact that $O(n)$ nodes are at distance $d$ from the source and are reachable by two paths from $s$. A randomized DFS seeking strategy (bounded or unbounded) visits these nodes with probability $2/3$ before the node reachable by a single path is visited. To reduce this probability, the DFS can be adjusted so that it postpones visiting nodes reachable by two paths and visits nodes reachable by a single path first. This can be done after the graph over the visited nodes contains a cycle and requires a modification to an unbounded DFS strategy, because the bounded version cannot discover cycles which contain more than $d+1$ nodes (the network used in Example~\ref{ex:2} has a cycle over $2d$ nodes which cannot be discovered by the DFS bounded by $d$).
To define the adjusted DFS seeking strategy we need to introduce some more notions and notation. 

Given a cycle $K \in \cycles(G)$ and a node $s \in V$, $\entrance{G}{K}{s} \in K$ denotes the node of cycle $K$ that is closest to $s$ (we call it the \emph{entrance to the cycle from $s$}). Given $d \geq 1$, $i \geq 1$, graph $G$, and nodes $\{s,u\} \subseteq V(G)$, let
\begin{multline*}
R^i_d(G,s,u) = \{v \in V(G) : \textrm{there is exactly $i$ paths} \\ 
\textrm{of length $\leq d$ from $s$ to $v$ containing $u$}\}
\end{multline*}
be the set of nodes reachable from $s$ by exactly $i$ paths of length at most $d$, each path containing node $u$. In particular, if $G$ contains a cycle $K$, $R^1_n(G,s,\entrance{G}{K}{s})$ is the set of nodes reachable from $s$ by exactly one path that contains the entrance to the cycle from $s$.

The \emph{adjusted DFS} seeking strategy, $\sigma^\mathrm{\adjdfs} : \mathcal{I} \rightarrow \Delta(V)$ is formally defined as follows. Given any 
$(\bm{z},G') \in \mathcal{I}$ with $\bm{z} = (z_0,\ldots,z_k)$, $\act^{\adjdfs}(\bm{z},G') = z_j$ where , where $j$ is selected as follows

\begin{algorithm}
\DontPrintSemicolon
\SetAlgoNoEnd

    \uIf{$\cycles(G'[\bm{z}]) = \{K\}$ and $\neighb{\bm{z}}{G'} \cap \left(R^1_n(G',s,\entrance{G'}{K}{s})\setminus R^2_n(G',s)\right) \neq \varnothing$}{
         $j = \max \{i \in [k] : (\neighb{z_i}{G'}\setminus \bm{z}) \cap  R^1_n(G',s) \neq \varnothing \}$\;
    }
    \uElseIf{$\cycles(G'[\bm{z}]) = \{K\}$ and $\neighb{\bm{z}}{G'}\cap R^2_n(G',s) \neq \varnothing$}{
        $j = \max \left\{i \in [k] : (\neighb{z_i}{G'}\setminus \bm{z}) \cap R^2_n(G',s) \neq \varnothing \right\}$\;
    }
    \uElse{
        $j = \max \{i \in [k] : \neighb{z_i}{G'} \setminus \bm{z} \neq \varnothing \}$\;
    }
\end{algorithm}
\noindent and
\begin{equation*}
\sigma^{\adjdfs}(\bm{z},G')(v) =
 \begin{cases}
\frac{1}{|\neighb{\act^{\adjdfs}(\bm{z},G')}{G'} \setminus \bm{z}|},\\
 & \hspace{-10mm} \textrm{if $v \in \neighb{\act^{\adjdfs}(\bm{z},G')}{G'}\setminus \bm{z}$}\\
                                           0, & \hspace{-10mm} \textrm{otherwise.}
                                           \end{cases}
\end{equation*}
Like the DFS seeking strategy, strategy $\sigma^{\adjdfs}$ chooses, uniformly at random, an unvisited neighbour of an active node. The active node is a visited
node selected by function $\act^{\adjdfs}$. After the cycle is discovered, i.e. the graph over the visited nodes contains a cycle, the function chooses
the nodes reachable by a single path from $s$ via the entrance to the cycle before the visited nodes that have neighbours reachable by two paths from $s$ (since the graph has at most one cycle, the number of paths to the neighbours of unvisited nodes can be determined after the cycle is discovered).

The seeking strategy $\sigma^{\star}$ that we define below is a convex combination of three seeking strategies: the randomized DFS, depth bounded randomized DFS and adjusted DFS.
Let $\sigma^{\star} : \mathcal{I} \rightarrow \Delta(V)$ be a seeking strategy formally defined as follows. Given any $(\bm{z},G') \in \mathcal{I}$ let
\begin{multline*}
\sigma^{\star}(\bm{z},G')(v) = {} \\
\frac{3}{8}\sigma^{\dfs}(\bm{z},G')(v) + \frac{3}{8}\sigma^{\adjdfs}(\bm{z},G')(v) + \frac{1}{4}\sigma^{\dfs}_d(\bm{z},G')(v).
\end{multline*}
The following proposition (proven in the Appendix) provides a lower bound on seeker's payoff for the case when linking budget of the hider is $n$.

\begin{proposition}
\label{pr:cycle:lowerbound}
For any connected network $G$ and node $h \in V(G)$ at distance at most $d$ from $s$, the expected number of steps to discover the hider using strategy $\sigma^{\star}$ is less than or equal to
\begin{equation*}
\frac{9}{16}n + C(d),
\end{equation*}
where $C(d)$ depends on $d$ only.
\end{proposition}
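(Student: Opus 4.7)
The argument naturally splits on the structure of $G$. With $n$ nodes, at most $n$ edges, and connectivity, $G$ is either a tree ($n-1$ edges) or a unicyclic graph ($n$ edges, exactly one cycle $K$). I treat these cases separately: the tree case follows directly from Lemma~\ref{lemma:nosteps}, while the unicyclic case requires a per-node probabilistic decomposition keyed to the three components of $\sigma^{\star}$.

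If $G$ is a tree then $\cycles(G'[\bm{z}])$ is empty throughout the search, so $\sigma^{\adjdfs}$ behaves identically to $\sigma^{\dfs}$, and the depth-bounded $\sigma^{\dfs}_d$ merely visits the depth-$\leq d$ subtree (which contains $h$) before the rest. In each case Lemma~\ref{lemma:nosteps} shows that the expected position of $h$ is at most $(n+d-1)/2 \leq (9/16) n + C(d)$, as required. For the unicyclic case, linearity of expectation gives
\[
\Ex[\pos(h, \seq(G, \sigma^{\star}))] = 1 + \sum_{v \in V \setminus \{h\}} \Big(\tfrac{3}{8}\, p_{\dfs}(v) + \tfrac{3}{8}\, p_{\adjdfs}(v) + \tfrac{1}{4}\, p_{\dfs_d}(v)\Big),
\]
where $p_{\sigma}(v) = \Pr[v \text{ is visited before } h \text{ under } \sigma]$, reducing the task to controlling these per-node weighted averages.

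I would partition $V \setminus \{h\}$ into three classes: (i) nodes on a shortest $(s,h)$-path; (ii) nodes in tree subgraphs branching off this path and not lying on the cycle; and (iii) cycle-related nodes (on $K$ or in subtrees hanging off $K$). Class (i) contains at most $d$ nodes, each contributing $1$, absorbed into $C(d)$. Class (ii) nodes satisfy $p_{\sigma}(v) \leq 1/2$ for each component strategy, by the branching argument in the proof of Lemma~\ref{lemma:nosteps}. The core is class (iii): here plain DFS alone can give $p_{\dfs}(v) = 2/3$ (cf.\ Example~\ref{ex:1}), and the plan is to show that the other two components compensate so that $\tfrac{3}{8} p_{\dfs}(v) + \tfrac{3}{8} p_{\adjdfs}(v) + \tfrac{1}{4} p_{\dfs_d}(v) \leq 9/16$ for all but $O(d)$ exceptions. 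Two regimes guide the bookkeeping: when $K$ is discovered only at the end of its traversal (Example~\ref{ex:1}), $\sigma^{\adjdfs}$ degenerates to $\sigma^{\dfs}$ but $\sigma^{\dfs}_d$ skips the far arc of $K$, giving $p_{\dfs_d}(v) = 0$ on all but $O(d)$ cycle nodes and a weighted average of at most $1/2$; when $K$ is discovered early (Example~\ref{ex:2}), $\sigma^{\adjdfs}$ re-prioritizes the single-path descendants of $\entrance{G}{K}{s}$ and drives $p_{\adjdfs}(v)$ down to roughly $1/3$, yielding a weighted average of at most $13/24 < 9/16$. In both regimes the weighted average stays below $9/16$ up to an $O(d)$-sized set of exceptional nodes absorbed into $C(d)$.

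The main obstacle will be making this regime split uniform across every unicyclic configuration and every position of $h$ relative to $K$. The random time at which a cycle first appears in the visited subgraph $G'[\bm{z}]$ determines whether the first or the second conditional branch of the $\sigma^{\adjdfs}$ rule applies, and it simultaneously governs when certain two-path nodes migrate inside $R_d(G,s)$ and become eligible for early visiting by $\sigma^{\dfs}_d$. Carrying out a uniform case analysis keyed to $\entrance{G}{K}{s}$ and to the depth profile of $K$ is the delicate technical ingredient; once the per-node bound is verified for every class-(iii) node, summing yields $\Ex[\pos(h, \seq(G, \sigma^{\star}))] \leq (9/16) n + C(d)$, as required.
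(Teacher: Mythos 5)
Your overall strategy coincides with the paper's: express the expected capture time via linearity of expectation as $1+\sum_{v\neq h}\Pr[v\text{ visited before }h]$, bound this probability separately under each of the three component strategies, show the weighted average is at most $9/16$ for all but $O(d)$ nodes, and absorb the exceptions into $C(d)$. The tree case and the three-class partition are fine. But what you have written is a plan, not a proof: the entire substance of the proposition is the verification that $\frac{3}{8}p_{\dfs}(v)+\frac{3}{8}p_{\adjdfs}(v)+\frac{1}{4}p_{\dfs_d}(v)\leq 9/16$ for every non-exceptional node $v$, and this is precisely the step you defer as ``the delicate technical ingredient.'' In the paper this occupies Lemmata~\ref{lemma:nosteps:dfsd} and~\ref{lemma:nosteps:adfs} and Corollaries~\ref{cor:nosteps:dfs} and~\ref{cor:nosteps:star}: an exhaustive case analysis over the positions of $h$ and $v$ relative to the cycle $K$, to the sets $R^1_d$, $R^2_d$, $R^1_n$, $R^2_n$, and to $\entrance{G}{K}{s}$.

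More concretely, the two ``regimes'' you key to Examples~\ref{ex:1} and~\ref{ex:2} both have $h$ reachable by a single path (i.e.\ $h\in R^1_n(G,s,\entrance{G}{K}{s})$), and the weighted averages you compute there ($1/2$ and $13/24$) are correct and strictly below $9/16$. But they neither explain where the constant $9/16$ comes from nor cover the configurations that make it tight. When $h\in R^2_n(G,s)\setminus K$ with only one of its two paths of length at most $d$, the bounded-DFS component gives $p_{\dfs_d}(v)=3/4$ for a potentially linear number of off-cycle nodes $v\in R^2_d(G,s)$, so the weighted average is $\frac{3}{8}\cdot\frac12+\frac{3}{8}\cdot\frac12+\frac14\cdot\frac34=\frac{9}{16}$ exactly (the case $\{h,v\}\subseteq R^2_n(G,s)\setminus K$ with $K\setminus R^2_d(G,s)\neq\varnothing$ in Corollary~\ref{cor:nosteps:star}); and when $v\in K$ and $h\in R^2_n(G,s)\setminus K$ the value can reach $13/16$, which must be shown to occur for only $O(d)$ nodes (it does, because such $v$ lie on the unique short path to $h$). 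Until these cases, and the remaining ones enumerated in Lemma~\ref{lemma:nosteps:dfsd}, are actually checked, the coefficient $9/16$ is unsupported: the two example-driven regimes alone would at best suggest $13/24$, which the full analysis shows is not attainable.
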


\section{Conclusions}
We studied a problem of strategic hiding and exploration of a network and provided a formal definition of the game theoretic model of the problem. We characterized a Nash equilibrium of the game when the hider's budget allows for constructing trees only. The equilibrium strategy of the hider that we found guarantees equilibrium payoff against any seeking strategy and all equilibria yield the same payoff to the hider. The strategy of the seeker in the equilibrium that we found is the randomized DFS strategy. We showed that if an optimal hiding distance from the source is unique then all equilibria are payoff equivalent to the seeker and the DFS strategy guarantees the same expected number of steps to find the hider against any equilibrium strategy of the hider.

We also considered the case where the budget of the hider allows for the construction a network that contains a cycle. We showed that the randomized DFS strategy is no longer effective in this case and proposed a seeking strategy that finds the hider, hiding at distance $d$ from the source in the expected number of steps close, but greater, to the expected number of steps guaranteed by the DFS strategy on trees.

There are a number of possible avenues for future research. The most interesting, but most likely a very ambitious one, would be to characterize equilibria (or at least one equilibrium) in the case where the hider does not face any budget constraints. Another interesting question would be about the seeking strategy against networks containing a cycle. Is it possible to construct a seeking strategy that would find the hider hiding at distance $d$ from the source in $n/2 + C(d)$ steps, where $C(d)$ depends on $d$ only? Can the number of steps $9/16 n + C(d)$ provided in Proposition~\ref{pr:cycle:lowerbound} be improved?

%

\bibliographystyle{plainnat}
\bibliography{biblio}

\newpage

\appendix

\section*{Appendix}

\section*{Lower bound on seeker's payoff when $b = n$}
In this section we prove Proposition~\ref{pr:cycle:lowerbound} stating the upper bound on the number of nodes visited by seeking strategy $\tilde{\sigma}$ until the node chosen by the hider is found. 

The key to prove the proposition is characterization of the expected value of random variables $X_{vt}$ across the pairs of distinct nodes $v$ and $t$. The following lemma provides a characterization of the values of $X_{vt}$ under a bounded DFS strategy for the cases when $t$ is at distance at most $d$ from the source.

\begin{lemma}
\label{lemma:nosteps:dfsd}
Let $d > 0$, $G = \langle V, E\rangle$ be a graph containing at most one cycle $K$ such that $K \cap R^2_d(G,s) \neq \varnothing$, and let $t \in V$ be either a leaf or a node in $K$. Let $0 \leq \dist{G}{s}{t} \leq d$. Under bounded DFS seeking strategy $\sigma^{\dfs}_d$, for any node $v \in V\setminus P_G(s,t)$ such that $t\notin P_G(s,v)$,
\begin{itemize}
\item if $t \in R^{1}_n(G,s) \setminus R^1_n(G,s,\entrance{G}{K}{s})$ then
\begin{equation*}
\Ex(X_{vt}) = \begin{cases}
              \frac{1}{2}, & \textrm{if $\dist{G}{s}{v} \leq d$,}\\
              0, & \textrm{if $\dist{G}{s}{v} > d$.}
              \end{cases}
\end{equation*}
\item if $t \in R^1_n(G,s,\entrance{G}{K}{s})$ then
\begin{equation*}
\Ex(X_{vt}) = \begin{cases}
              \frac{2}{3}, & \textrm{if $v\in R^2_d(G,s)$,}\\
              \frac{1}{2}, & \textrm{if $v\in R^1_d(G,s) \cap R^1_n(G,s)$,}\\
              \frac{2}{3}, & \textrm{if $v\in R^1_d(G,s) \cap R^2_n(G,s)$ and $K\subseteq R^2_d(G,s)$,}\\
              \frac{1}{2}, & \textrm{if $v\in R^1_d(G,s) \cap R^2_n(G,s)$ and $K\setminus R^2_d(G,s) \neq \varnothing$,}\\
              \end{cases}
\end{equation*}
\item if $t \in K$ and $v \in K$ then $\Ex(X_{vt}) = 1/2$,
\item if $t \in R^2_n(G,s) \setminus K$ and $v \in K$ then
\begin{equation*}
\Ex(X_{vt}) = \begin{cases}
              1, & \parbox[t]{8cm}{if $t \in R^1_d(G,s) \cap R^1_d(G,s,v)$,}\\
              \frac{3}{4}, & \parbox[t]{8cm}{if $t \notin R^1_d(G,s) \cap R^1_d(G,s,v)$ and $v \in R^2_d(G,s)$,}\\
              \frac{1}{2}, & \parbox[t]{8cm}{if $t \notin R^1_d(G,s) \cap R^1_d(G,s,v)$ and $v \in R^1_d(G,s)$.}\\
              \end{cases}
\end{equation*}

\item if $\{t,v\} \subseteq R^2_n(G,s) \setminus K$ and $K\subseteq R^2_d(G,s)$ then
\begin{align*}
\Ex(X_{vt}) & =  \frac{1}{2} + \frac{1}{8}[v \in R^2_d(G,s)][t \in R^1_d(G,s) \cap R^1_d(G,s,\exit{G}{K}{s}{v})]
\end{align*}

\item if $\{t,v\} \subseteq R^2_n(G,s) \setminus K$ and $K\setminus R^2_d(G,s) \neq \varnothing$ then
\begin{align*}
\Ex(X_{vt}) =  \frac{1}{2} & + \frac{1}{4}[t \in R^1_d(G,s)][v \in R^2_d(G,s)]
\end{align*}

\end{itemize}
\end{lemma}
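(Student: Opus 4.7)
The plan is to extend the branching analysis used in the proof of Lemma \ref{lemma:nosteps}, tracking how $\sigma^{\dfs}_d$ modifies the ordering near the unique cycle $K$ and the distance threshold $d$. First, I would reduce to the case where $t$ is a ``tip'' in the relevant sense: exactly as in Lemma \ref{lemma:nosteps}, nodes reachable from $s$ only through $t$ are never visited before $t$ and can be deleted without affecting $\Ex(X_{vt})$ for the $v$'s considered in the statement (those with $t \notin P_G(s,v)$). Next, for any such $v$ I would pick out a ``fork'' node, i.e.\ the last node on a shortest $s$-to-$t$ path where the $s$-to-$v$ route leaves the $s$-to-$t$ route, and classify the fork into three types: both arms lie in a tree hanging off $s$ or off the cycle, both arms trace arcs of $K$, or one arm traces a cycle arc while the other leaves the cycle into a dangling tree.

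I would then dispatch each bullet of the lemma. For $t \in R^1_n(G,s)\setminus R^1_n(G,s,\entrance{G}{K}{s})$ the path $s\rightsquigarrow t$ avoids the cycle entrance entirely, so the cycle never becomes relevant for the branch containing $t$; the priority rule of $\sigma^{\dfs}_d$ forbids visiting any $R_n\setminus R_d$ node before all reachable $R_d$ nodes, which immediately yields the dichotomy $1/2$ (uniform choice at the fork, exactly as in Lemma \ref{lemma:nosteps}) versus $0$. For $t\in K$ with $v\in K$, the fork is the entrance $\entrance{G}{K}{s}$ and the two cycle-arcs are symmetric under the uniform neighbour selection, giving $1/2$. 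For $t \in R^1_n(G,s,\entrance{G}{K}{s})$, the fork is again $\entrance{G}{K}{s}$ but now the active node has up to three or four symmetric unvisited $R_d$-neighbours (the two cycle arcs and one or two dangling $R_d$-subtrees containing $v$); the $2/3$ values come from $v$ being visited before $t$ on exactly two of these three arms, and the $1/2$ values from configurations where the arm containing $v$ and the arm containing $t$ appear once each among two $R_d$-candidates.

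The mixed cases $t\in R^2_n(G,s)\setminus K$, $v\in K$ and $\{t,v\}\subseteq R^2_n(G,s)\setminus K$ are where most of the work lies, because they depend on which arc of $K$ the DFS traverses first \emph{and} on whether $t$ is ``postponed'' under the threshold $d$ and rescued after cycle discovery. The key observation is that once $G'[\bm{z}]$ contains $K$, the active-node rule switches to the second branch of the $\act^{\dfs_d}$ definition, which visits the newly in-range nodes (the set $R^1_d\cap(R^2_{d+1}\setminus R^2_d)$) in \emph{first seen first visited} order; combined with the asymmetric positions of $t$ and $v$ on the two cycle arcs, this produces the $3/4$ probability in case 4 (three of four symmetric orderings of the two arc completions and the two dangling subtrees place $v$ before $t$) and the $1/8$ correction in case 5 (a quarter of the time the ordering matters, and within that a half of the time $t$ precedes $v$). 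For case 6, $K\setminus R^2_d(G,s)\neq\varnothing$ means one arc of $K$ is long enough to prevent the postponement mechanism from triggering on the $t$-side, which collapses the correction to the $1/4$-indicator shown.

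The main obstacle will be bookkeeping: $\sigma^{\dfs}_d$ has three priority tiers (in-range before cycle discovery, postponed-and-rescued after cycle discovery, and the remaining out-of-range nodes), and within each tier it uses either last-seen or first-seen tie-breaking. I would manage this by isolating, for every $(v,t)$ pair, the minimal subgraph that determines $\Ex(X_{vt})$ (a ``tripod'' or ``quadripod'' rooted at the relevant fork, whose arms are cycle-arcs or dangling trees), and computing $\Ex(X_{vt})$ directly as an expectation over the uniform sequence of first-hit events at that fork under the Markovian active-node dynamics. The indicator factors in cases 5 and 6 then emerge mechanically from checking on which arm $t$ and $v$ lie relative to $\entrance{G}{K}{s}$ and $\exit{G}{K}{s}{v}$.
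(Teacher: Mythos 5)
Your overall architecture is the same as the paper's: decompose the hitting time as $X_t=\sum_v X_{vt}$, reduce each pair $(v,t)$ to a local ``fork'' configuration involving the cycle entrance, the two arc-successors, and the exit nodes, and then case-split on membership in $K$, $R^1_d$, $R^2_d$, $R^1_n$, $R^2_n$. The first, third and fourth bullets are handled the same way the paper handles them. But as written the proposal has concrete gaps where the announced heuristics do not reproduce the actual event decompositions. The clearest one is the second bullet with $v\in R^1_d(G,s)\cap R^2_n(G,s)$: your fork picture (``the arm containing $v$ and the arm containing $t$ appear once each among two $R_d$-candidates'') cannot explain why the answer is $2/3$ when $K\subseteq R^2_d(G,s)$ but drops to $1/2$ when $K\setminus R^2_d(G,s)\neq\varnothing$, since in both subcases $v$ is reached by exactly one short path. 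The correct derivation of the $1/2$ is asymmetric: with probability $1/3$ the arc leading to $\exit{G}{K}{s}{v}$ is entered before $u_1$ and before the other arc, so $v$ is reached before the cycle is even discovered; with probability $1/6$ both arcs are entered before $u_1$, the cycle is discovered, and $v$ is then \emph{rescued} ahead of $u_1$ by the second branch of $\act^{\dfs_d}$ (the postponed nodes in $R^1_d\cap(R^2_{d+1}\setminus R^2_d)$ visited in first-seen-first-visited order). You invoke that rescue mechanism only for the mixed cases (bullets 4--6), but it is already indispensable here, and your two-candidate symmetry argument would give a wrong probability.

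Two further points. First, you omit the $\Ex(X_{vt})=1$ subcase of bullet 4 ($t\in R^1_d(G,s)\cap R^1_d(G,s,v)$ with $v\in K$): when every path of length at most $d$ from $s$ to $t$ passes through $v$, the bound forces $v$ to be visited first with certainty, and this case must be split off before the $3/4$ and $1/2$ computations, which assume a short $s$--$t$ path avoiding $v$ exists. Second, the numerical justifications for the $3/4$, the $1/8$ correction and the $1/4$ correction (``three of four symmetric orderings,'' ``a quarter of the time the ordering matters, and within that a half of the time'') land on the right numbers but are not derivations: the paper obtains, e.g., $3/4=1/2+1/2\cdot 1/2$ and $5/8=1/4+1/2\cdot(1/2+1/4)$ as conditional-probability chains over the order in which $u_1$, $u_2$, the two exit successors $u_3,u_4$, and the post-discovery rescue fire, and it must additionally distinguish whether $t$ and $v$ leave the cycle at the same exit ($u_3=u_4$) or not -- a case split your plan does not mention. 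Until those chains are written out case by case, the proposal is a correct plan in the paper's spirit rather than a proof.
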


\begin{proof}
Take any connected graph $G$ with starting node $s \in V$ such that either $\cycles(G) = \varnothing$ or $\cycles(G) = \{K\}$, so that $K$ is the unique cycle in $G$, and $K \cap R^2_d(G,s) \neq \varnothing$, so that at least one node in the cycle is reachable by exactly two paths of length at most $d$. Let $t \in V$ be either a leaf or a node in $K$ such that $0 \leq \dist{G}{s}{t} \leq d$.
Let $v \in V(G)$ be any node such that $v \neq t$, $t \notin P_G(s,v)$, and $v\notin P_G(s,t)$. 

Since the seeking strategy visits nodes at distance at most $d$ from $s$ first so $\Ex(X_{vt}) = 0$ for all $v$ with $\dist{G}{s}{v} > d$. For the remaining part of the proof assume that $\dist{G}{s}{v} \leq d$.

Suppose that $t \in R^1_n(G,s) \setminus R^1_n(G,s,\entrance{G}{K}{s})$, i.e. there is a unique path from $s$ to $t$ in $G$ and that path does not contain any node of the cycle. Since $v \in V(G) \setminus P_G(s,t)$ and $t \notin P_G(s,v)$ so $\Ex(X_{tv}) = 1/2$. 
This follows by the same arguments as those used in proof of Lemma~\ref{lemma:nosteps}.

Suppose that $t \in R^1_n(G,s,\entrance{G}{K}{s})$, i.e. there is a unique path from $s$ to $t$ in $G$ and that path contains a node of the cycle. In this case
the path from $s$ to $t$ contains $\entrance{G}{K}{s}$ and no other node of the cycle. 

Suppose that $v \in R^1_n(G,s)$. Then there is exactly one path from $s$ to $v$ in $G$ and, since $\dist{G}{s}{v} \leq d$, this path is of length at most $d$.
In this case $\Ex(X_{vt}) = 1/2$, by analogous arguments to those used in proof of Lemma~\ref{lemma:nosteps}. 

Suppose that $v \in R^2_n(G,s)$. Since $\dist{G}{s}{v} \leq d$ so either $v \in R^1_d(G,s)$ or $v \in R^2_d(G,s)$.
Let $u_1$ be the successor of $\entrance{G}{K}{s}$ on the path from $s$ to $t$, and let $u_2$ and $u_3$ be the two distinct successors of $\entrance{G}{K}{s}$ on the paths from $s$ to $v$ such that $\dist{G}{u_2}{v} \leq \dist{G}{u_3}{v}$. In the case of $v \notin K$, let $u_4$ be the successor $\exit{G}{K}{s}{v}$ on the path from $s$ to $v$ (c.f. Figure~\ref{fig:conf1}).

\begin{figure}[h]
\begin{center}
  \includegraphics{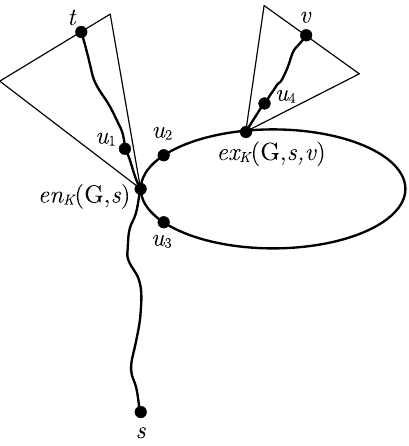}
\end{center}
\caption{Paths from $s$ to $t$ and from $s$ to $v$ in the case of $t \in R^1_n(G,s,\entrance{G}{K}{s})$ and $v \in R^2_n(G,s)$.}
  \label{fig:conf1}
\end{figure}

Suppose that $K \subseteq R^2_d(G,s)$ (i.e. every node in the cycle is reachable by exactly two paths of length at most $d$ from $s$). In this case the set of visited nodes contains all the nodes of the cycle if and only if either $u_2$ or $u_3$ is visited before $u_1$. Thus if $v \in K$ then $v$ is visited before $t$
with probability $2/3$ and so $\Ex(X_{vt}) = 2/3$. Suppose that $v \notin K$. In this case $v$ is visited before $t$ if and only if $u_4$ is visited before $u_1$.
Since $K\subseteq R^2_d(G,s)$ so $\exit{G}{K}{s}{v} \in R^2_d(G,s)$. Since $\dist{G}{s}{v} \leq d$ so $\dist{G}{s}{u_4} \leq d$ and $u_4 \in R^2_n(G,s) \cap (R^1_d(G,s) \cup R^2_d(G,s))$. Hence, by construction of the seeking strategy, $u_4$ is visited before $u_1$ if
and only if either $u_2$ or $u_3$ is visited before $u_1$. This happens with probability $2/3$ and so $\Ex(X_{vt}) = 2/3$ in this case.

Suppose that $K \setminus R^2_d(G,s) \neq \varnothing$. In this case at least one node in the cycle is not reachable by two paths of length at most $d$.
Since $K \cap R^2_d(G,s) \neq \varnothing$ so every node in the cycle is reachable by at least one path of length at most $d$ from $s$.
Thus the set of visited nodes contains all the nodes of the cycle if and only if both $u_2$ and $u_3$ are visited before $u_1$.
If $v \in R^2_d(G,s)$ then $v$ is visited before $t$ if either $u_2$ or $u_3$ is visited before $t$. Hence $\Ex(X_{vt}) = 2/3$ in this case.
If $v \notin R^2_d(G,s)$ then $v \in R^1_d(G,s)$ and node $v$ is visited before $t$ if either $u_2$ is visited before $u_1$ and $u_3$, or 
$u_3$ is visited before $u_2$ and $u_2$ is visited before $u_1$. In the former case $v$ is visited before cycle $K$ is discovered and the case happens with probability $1/3$. In the latter case the cycle is discovered before $u_1$ is visited and node $v$ is visited before $u_1$, because after discovering the cycle
$u_1$ is visited after all the nodes reachable by two paths from $s$, at least one of them of the length at most $d$,  are visited. The case happens with probability $1/6$. Thus $v$ is visited before $t$ with probability $1/2$ and so $\Ex(X_{vt}) = 1/2$.

Suppose that $t \in R^2_n(G,s)$, i.e. there are two paths from $s$ to $t$ in $G$. 

Suppose that $t \in K$ and $v \in K$, i.e. both $t$ and $v$ belong to the cycle. Let $u_1 \in K$ be the successor of $\entrance{G}{K}{s}$ in the cycle that is closer to $t$ and $u_2 \in K$ be the successor of $\entrance{G}{K}{s}$ in the cycle that is closer two $v$. Since $\dist{G}{s}{t} \leq d$ and $\dist{G}{s}{v} \leq d$ so $v$ is visited before $t$ if and only if $u_2$ is visited before $u_1$. This happens with probability $1/2$ and so $\Ex(X_{vt}) = 1/2$ in this case.

Suppose that $t \in R^2_n(G,s)\setminus K$, i.e. there are two paths from $s$ to $t$ in $G$ and $t$ is not in the cycle. Let $u_1\in K$ be the successor of $\entrance{G}{K}{s}$ in the cycle that is closer to $t$ than the other successor of $\entrance{G}{K}{s}$ in the cycle, $u_2\in K$.
Let $u_3$ be the successor of $\exit{G}{K}{s}{t}$ on the path from $s$ to $t$. 

\begin{figure}[h]
\begin{center}
  \includegraphics{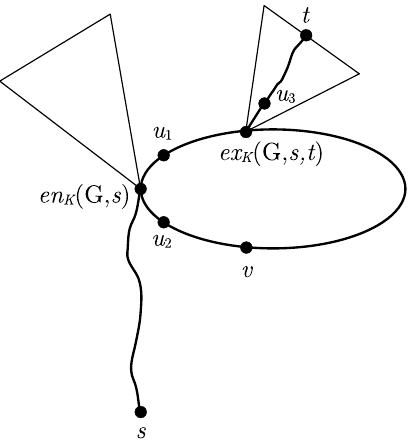}
\end{center}
\caption{Paths from $s$ to $t$ and from $s$ to $v$ in the case of $t \in R^1_n(G,s)\setminus K$ and $v \in K$.}
  \label{fig:conf2}
\end{figure}

Suppose that $v \in K$ (c.f. Figure~\ref{fig:conf2}). Since $v$ is not on every path from $s$ to $t$ of length at most $d$ so there exists a path from $s$ to $t$ of length at most $d$ that does not contain $v$ and, consequently, contains $u_1$. Suppose that $v \in R^2_d(G,s)$. In this case $v$ is visited before $t$ if and only if either $u_2$ is visited before $u_1$ or $u_1$ is visited before $u_2$ and $u_2$ is visited before $u_3$. This happens with probability $1/2 + 1/2 \cdot 1/2 = 3/4$. Hence $\Ex(X_{vt}) = 3/4$ in this case.

Suppose that $v \notin R^2_d(G,s)$, in which case $v \in R^1_d(G,s)$. If the path from $s$ to $v$ of length at most
$d$ contains $u_1$ then $v$ is visited before $t$ with probability $1/2$. Otherwise the path contains $u_2$ and $v$ is visited before $t$ if and only if $u_2$ is visited before $u_1$. This happens with probability $1/2$. 
Hence $\Ex(X_{vt}) = 1/2$ in this case.

\begin{figure}[h]
\begin{center}
  \includegraphics{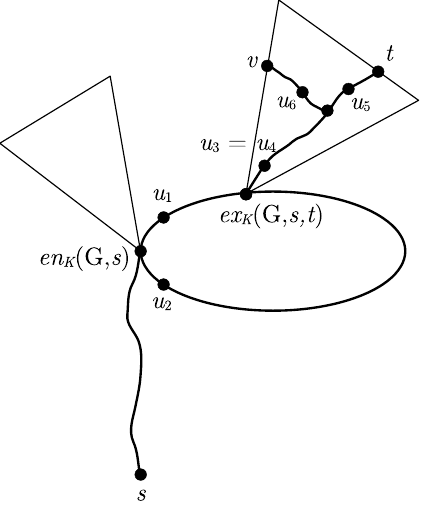}
\end{center}
\caption{Paths from $s$ to $t$ and from $s$ to $v$ in the case of $t \in R^2_n(G,s)\setminus K$, $v \in R^2_n(G,s)\setminus K$, and $u_3 = u_4$.}
  \label{fig:conf3}
\end{figure}

Suppose that $v \in R^2_n(G,s) \setminus K$, i.e. there are two paths from $s$ to $v$ in $G$ and $v$ is not in the cycle. Let $u_4$ be the successor of
$\exit{G}{K}{s}{v}$ on the path from $s$ to $v$. 
Consider first the case of $u_3 = u_4$. Let $u_5$ be the first node on the path from $u_3$ to $t$ that is not on the path from $u_3$ to $v$ and let $u_6$ be the first node on the path from $u_3$ to $v$ that is not on the path from $u_3$ to $t$ (c.f. Figure~\ref{fig:conf3}). If $t \in R^2_d(G,s)$ and $v \in R^2_d(G,s)$ then
$v$ is visited before $t$ if and only if $u_6$ is visited before $u_5$. This happens with probability $1/2$ and so $\Ex(X_{vt}) = 1/2$. 
Suppose that $t \in R^1_d(G,s)$. In this case there is a unique path of length at most $d$ from $s$ to $t$ and, since $u_1$ is closer to $t$ than $u_2$, 
this path contains $u_1$. Suppose that $v \in R^2_d(G,s)$. If the cycle is discovered (i.e. both $u_1$ and $u_2$ are visited before $u_3$, which happens with probability $1/2 \cdot 1/2 + 1/2 \cdot 1/2 = 1/2$) then $v$ is visited before $t$ if and only if $u_6$ is visited before $u_5$. Similarly, if $u_1$ is visited before $u_3$ and $u_3$ is visited before $u_2$ (which happens with probability $1/2\cdot 1/2 = 1/4$) then $v$ is visited before $t$ if and only if $u_6$ is visited before $u_5$. If $u_2$ is visited before $u_3$ and $u_3$ is visited before $u_1$ then $v$ is visited before $t$ with probability $1$ because in this case $t$ cannot be reached by the seeking strategy until $u_1$ is visited, i.e. the cycle is discovered, and $v$ is visited before $u_1$ if $u_3$ is visited before $u_1$. 
Hence $v$ is visited before $t$ with probability $1/2\cdot 1/2 + 1/4 \cdot 1/2 + 1/4 \cdot 1 = 5/8$ and so $\Ex(X_{vt}) = 5/8$.
Suppose that $v \in R^1_d(G,s)$. Since $u_1$ is closer to $t$ than $u_2$ so $u_1$ is closer to $u_3$ than $u_2$ and, consequently, $u_1$ is closer to $v$ than $u_2$. Thus $v$ and $t$ are visited only after $u_1$ is visited and $v$ is visited before $t$ if and only if $u_6$ if visited before $u_5$.
This happens with probability $1/2$ and so $\Ex(X_{vt}) = 1/2$.
 
\begin{figure}[h]
\begin{center}
  \includegraphics{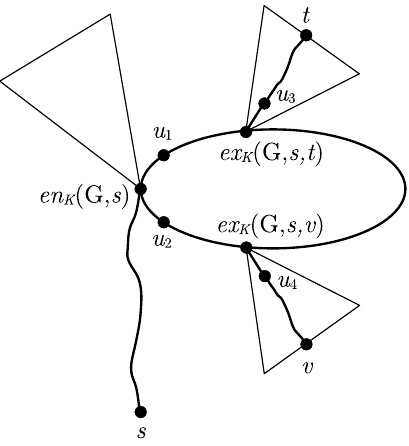}
\end{center}
\caption{Paths from $s$ to $t$ and from $s$ to $v$ in the case of $t \in R^2_n(G,s)\setminus K$, $v \in R^2_n(G,s)\setminus K$, and $u_3 \neq u_4$.}
  \label{fig:conf4}
\end{figure}

Second, consider the case of $u_3 \neq u_4$ (c.f. Figure~\ref{fig:conf4}).

Suppose that $K \subseteq R^2_d(G,s)$, i.e. all the nodes in the cycle are reachable from $s$ by two paths of length at most $d$. 

Suppose that $\{\exit{G}{K}{s}{t},\exit{G}{K}{s}{v}\} \subseteq R^2_{d-1}(G,s)$, so that $u_3 \in R^2_d(G,s)$ and $u_4 \in R^2_d(G,s)$,
Consider the following three subcases: 
\begin{enumerate*}[label=(\alph*)]
\item $t \in R^2_d(G,s)$ and $v \in R^2_d(G,s)$,\label{p:case:1}
\item $t \in R^1_d(G,s)$ and $v \in R^1_d(G,s)$,\label{p:case:2}
\item $t \in R^1_d(G,s)$ and $v \in R^2_d(G,s)$.\label{p:case:3}
\end{enumerate*}
In case~\ref{p:case:1}, $t$ is visited before $v$ if and only if $u_4$ is visited before $u_3$, which happens with probability $1/2$ in both cases of $u_1$ being visited before $u_2$ and $u_2$ being visited before $u_1$. Hence $v$ is visited before $t$ with probability $1/2$ and so $\Ex(X_{vt}) = 1/2$.

In case~\ref{p:case:2}, if $v$ is closer to $u_1$ than to $u_2$ then the unique path of length at most $d$ from $s$ to $v$ includes $u_1$ and does not include $u_2$. The same is the case for $t$, because $t$ is closer to $u_1$ than to $u_2$. Thus if $v$ is closer to $u_1$ than to $u_2$ then $v$ is visited before $t$ if and only if either $u_1$ is visited before $u_2$ and $u_4$ is visited before $u_3$ or $u_2$ is visited before $u_1$ and $u_4$ is visited before $u_3$. This happens with probability $1/2\cdot 1/2 + 1/2 \cdot 1/2 = 1/2$ and so $\Ex(X_{vt}) = 1/2$.
Notice that in the case of $u_2$ being visited before $u_1$, if $u_3$ was visited before $u_4$ and $u_1$ was visited then, although $t$ would not be visited before $u_1$ (because the path from $s$ to $t$ going through $u_2$ is of length greater than $d$), it would be visited before $v$ because the seeking strategy visits nodes in the ``first seen first visited'' order (i.e. those whose neighbour in the set of visited nodes is earlier in the sequence of visited nodes are visited before those whose neighbour in the set of visited nodes is later in the sequence). If, on the other hand, $u_4$ was visited before $u_3$ then, although $v$ would not be visited before $u_1$, it would be visited before $t$, again, because the seeking strategy visits nodes in the ``first seen first visited'' order.
If $v$ is closer to $u_2$ than to $u_1$ then the unique path of length at most $d$ from $s$ to $v$ includes $u_2$ and does not include $u_1$. On the other hand,
the unique path of length at most $d$ from $s$ to $t$ includes $u_1$ and does not include $u_2$ because $t$ is closer to $u_1$ than to $u_2$.
Thus in this case $\exit{G}{K}{s}{v}$ is closer to $u_2$ than $\exit{G}{K}{s}{t}$. Hence $v$ is visited before $t$ if either $u_1$ is visited before $u_2$
and $\exit{G}{K}{s}{v}$ is visited before $u_3$ or $u_2$ is visited before $u_1$ and $u_4$ is visited before $\exit{G}{K}{s}{t}$. This happens with probability 
$1/2\cdot 1/2 + 1/2\cdot 1/2 = 1/2$ and so $\Ex(X_{tv}) = 1/2$.
 
In case~\ref{p:case:3}, if $\exit{G}{K}{s}{v}$ is on the unique path of length at most $d$ from $s$ to $t$ then, since the path includes $u_1$ and does not include
$u_2$, $\exit{G}{K}{s}{v}$ is closer to $u_1$ than $\exit{G}{K}{s}{t}$. In this case $v$ is visited before $t$ if and only if either $u_1$ is visited before $u_2$ and $u_4$ is visited before $\exit{G}{K}{s}{t}$ or $u_2$ is visited before $u_1$ and either $\exit{G}{K}{s}{v}$ is visited before $u_3$ or $u_3$ is visited before $\exit{G}{K}{s}{t}$ and $u_4$ is visited before $u_1$. This happens with probability $1/2\cdot 1/2 + 1/2 \cdot (1/2 + 1/2 \cdot 1/2) = 5/8$ and so $\Ex(X_{tv}) = 5/8$.
If $\exit{G}{K}{s}{v}$ is not on the unique path of length at most $d$ from $s$ to $t$ then $\exit{G}{K}{s}{v}$ is closer to $u_2$ than $\exit{G}{K}{s}{t}$.
In this case $v$ is visited before $t$ if and only if either $u_1$ is visited before $u_2$ and $\exit{G}{K}{s}{v}$ is visited before $u_3$ or $u_2$ is visited before $u_1$ and $u_4$ is visited before $\exit{G}{K}{s}{t}$. This happens with probability $1/2\cdot 1/2 + 1/2 \cdot 1/2 = 1/2$ and so $\Ex(X_{tv}) = 1/2$.

Suppose that $K \setminus R^2_d(G,s) \neq \varnothing$, i.e. there exists nodes in the cycle which are reachable from $s$ by two paths of length at most $d$.
Suppose that $t \in R^2_d(G,s)$ and $v \in R^2_d(G,s)$. In this case both nodes $u_3$ and $u_4$ are visited if either $u_1$ or $u_2$ is visited but before all
the nodes in the cycle are visited. Therefore $v$ is visited before $t$ if and only if $u_4$ is visited before $u_3$. This happens with probability $1/2$ in the case when $u_1$ is visited before $u_2$ as well as in the case when $u_2$ is visited before $u_1$. Hence $v$ is visited before $t$ with probability $1/2$
and so $\Ex(X_{tv}) = 1/2$. Suppose that $t \in R^1_d(G,s)$ and $v \in R^2_d(G,s)$. Since $u_1$ is closer to $t$ than $u_2$ so $t$ is visited only after $u_1$ is
visited. Hence $v$ is visited before $t$ if and only if either $u_2$ is visited before $u_1$ or $u_1$ is visited before $u_2$ and $u_4$ is visited before $u_3$.
This happens with probability $1/2 + 1/2\cdot 1/2 = 3/4$. Therefore $\Ex(X_{tv}) = 3/4$ in this case.
Suppose that $t \in R^1_d(G,s)$ and $v \in R^1_d(G,s)$. If $u_1$ is closer to $v$ than $u_2$ then, since $u_1$ is closer to $t$ than $u_2$, $v$ is visited before $t$ if and only if $u_4$ is visited before $u_3$, which happens with probability $1/2$. If $u_2$ is closer to $v$ than $u_1$ then $v$ is visited before $t$
if and only if $u_2$ is visited before $u_1$, which happens with probability $1/2$. Hence, in either case, $\Ex(X_{tv}) = 1/2$.
\end{proof}

Since the DFS strategy is equal to the bounded DFS strategy with bound $n$, $\sigma^{\dfs} = \sigma^{\dfs}_n$, we have the following corollary from Lemma~\ref{lemma:nosteps:dfsd}.

\begin{corollary}
\label{cor:nosteps:dfs}
Let $G = \langle V, E\rangle$ be a graph containing at most one cycle $K$ and let $t \in V$ be either a leaf or a node in $K$. Under the DFS seeking strategy $\sigma^{\dfs}$, for any node $v \in V\setminus P_G(s,t)$ such that $t\notin P_G(s,v)$,
\begin{itemize}
\item if $t \in R^{1}_n(G,s) \setminus R^1_n(G,s,\entrance{G}{K}{s})$ then $\Ex(X_{vt}) = 1/2$,
\item if $t \in R^1_n(G,s,\entrance{G}{K}{s})$ then
\begin{equation*}
\Ex(X_{vt}) = \begin{cases}
              \frac{1}{2}, & \textrm{if $v\in R^1_n(G,s)$,}\\
              \frac{2}{3}, & \textrm{if $v \in R^2_n(G,s)$,}
              \end{cases}
\end{equation*}
\item if $t \in K$ and $v \in K$ then $\Ex(X_{vt}) = 1/2$,
\item if $t \in R^2_n(G,s) \setminus K$ and $v \in K$ then $\Ex(X_{vt}) = 3/4$,
\item if $\{t,v\} \subseteq R^2_n(G,s) \setminus K$ then $\Ex(X_{vt}) = 1/2$.
\end{itemize}
\end{corollary}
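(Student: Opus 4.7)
The plan is to obtain the corollary as the special case $d = n$ of Lemma~\ref{lemma:nosteps:dfsd}. Since $\sigma^{\dfs} = \sigma^{\dfs}_n$ by construction, the expected values $\Ex(X_{vt})$ produced by randomized DFS are precisely those computed by the lemma with $d = n$, and the five bullets of the corollary follow once one verifies that the lemma's hypotheses are met and its extra case distinctions collapse.

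First I would check the hypotheses. The condition $\dist{G}{s}{t} \leq d$ is automatic because every simple path has length at most $n-1$, and the branch $\dist{G}{s}{v} > d$ in the lemma is likewise never triggered. If $G$ contains the cycle $K$, then every node of $K$ lies on two distinct simple $s$-paths of length at most $n-1$, so $K \subseteq R^2_n(G,s) = R^2_d(G,s)$; in particular $K \cap R^2_d(G,s) \neq \varnothing$ (which is the standing hypothesis of Lemma~\ref{lemma:nosteps:dfsd}) and the lemma's ``$K \setminus R^2_d(G,s) \neq \varnothing$'' sub-branches never fire. Moreover, $R^1_n(G,s)$ and $R^2_n(G,s)$ are disjoint by definition (they are classified by the exact number of simple $s$-paths), which kills any subcase of the lemma that requires simultaneous membership in $R^1_d(G,s)$ and $R^2_n(G,s)$.

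With these simplifications each bullet of the corollary matches a single surviving case of the lemma. For $t \in R^1_n(G,s) \setminus R^1_n(G,s,\entrance{G}{K}{s})$ the lemma gives $\tfrac{1}{2}$ directly. For $t \in R^1_n(G,s,\entrance{G}{K}{s})$, only the $v \in R^2_d(G,s)$ subcase (value $\tfrac{2}{3}$) and the $v \in R^1_d(G,s) \cap R^1_n(G,s)$ subcase (value $\tfrac{1}{2}$) survive, matching the two sub-clauses of the statement. The case $t,v \in K$ yields $\tfrac{1}{2}$ verbatim. For $t \in R^2_n(G,s) \setminus K$ and $v \in K$, the top branch $t \in R^1_d(G,s) \cap R^1_d(G,s,v)$ is empty because $t \in R^2_n(G,s)$ is disjoint from $R^1_n(G,s) = R^1_d(G,s)$; since $v \in K \subseteq R^2_d(G,s)$, the middle branch applies and gives $\tfrac{3}{4}$. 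Finally, for $\{t,v\} \subseteq R^2_n(G,s) \setminus K$, only the ``$K \subseteq R^2_d$'' formula is relevant, and its indicator $[t \in R^1_d(G,s) \cap R^1_d(G,s,\exit{G}{K}{s}{v})]$ vanishes for the same reason, leaving $\tfrac{1}{2}$.

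The argument is essentially bookkeeping, and the only ingredients forcing each reduction are the two simple observations $R^1_n(G,s) \cap R^2_n(G,s) = \varnothing$ and $K \subseteq R^2_n(G,s)$. I do not anticipate any genuine obstacle beyond careful tracking of the cases.
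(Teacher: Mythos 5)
Your overall route is exactly the paper's: the paper derives the corollary in one line by setting $d=n$ in Lemma~\ref{lemma:nosteps:dfsd} (using $\sigma^{\dfs}=\sigma^{\dfs}_n$), and your bookkeeping of which branches survive is the right way to flesh that out. All five final values you obtain are correct. However, one intermediate claim you rely on is false: it is not true that $K\subseteq R^2_n(G,s)$. The entrance $\entrance{G}{K}{s}$ is reached by exactly \emph{one} simple path from $s$ (every path from $s$ into the cycle must pass through it first), so $\entrance{G}{K}{s}\in R^1_n(G,s)$ and, with $d=n$, it is the branches of the lemma conditioned on $K\setminus R^2_d(G,s)\neq\varnothing$ that fire, not the ones conditioned on $K\subseteq R^2_d(G,s)$. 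This slip is harmless in most bullets (the competing subcases there require $v\in R^1_n(G,s)\cap R^2_n(G,s)=\varnothing$, or the relevant indicators vanish because $t\in R^2_n(G,s)$ implies $t\notin R^1_n(G,s)$, so both branches give the same value), but in the bullet with $t\in R^2_n(G,s)\setminus K$ and $v\in K$ your justification ``$v\in K\subseteq R^2_d(G,s)$'' does not cover $v=\entrance{G}{K}{s}$, for which the lemma would give $1/2$ rather than $3/4$. The correct reason that case cannot arise is that every path from $s$ to such a $t$ passes through the entrance, so $\entrance{G}{K}{s}\in P_G(s,t)$ and is excluded by the hypothesis $v\in V\setminus P_G(s,t)$. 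With that substitution your argument is complete.
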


Lastly, the following lemma characterizes the expected value of $X_{vt}$ for the adjusted DFS seeking strategy.
\begin{lemma}
\label{lemma:nosteps:adfs}
Let $d > 0$, $G = \langle V, E\rangle$ be a graph containing at most one cycle $K$, and let $t \in V$ be either a leaf or a node in $K$. Under adjusted DFS seeking strategy $\sigma^{\adjdfs}$, for any node $v \in V\setminus P_G(s,t)$ such that $t\notin P_G(s,v)$,
\begin{itemize}
\item if $t \in R^{1}_n(G,s) \setminus R^1_n(G,s,\entrance{G}{K}{s})$ then $\Ex(X_{vt}) = 1/2$,
\item if $t \in R^1_n(G,s,\entrance{G}{K}{s})$ then
\begin{equation*}
\Ex(X_{vt}) = \begin{cases}
              \frac{1}{2}, & \textrm{if $v\in R^1_n(G,s)$,}\\
              \frac{2}{3}, & \textrm{if $v \in K$,}\\
              \frac{1}{3}, & \textrm{if $v \in R^2_n(G,s)\setminus K$,}
              \end{cases}
\end{equation*}
\item if $t \in K$ and $v \in K$ then $\Ex(X_{vt}) = 1/2$,
\item if $t \in R^2_n(G,s) \setminus K$ and $v \in K$ then $\Ex(X_{vt}) = 3/4$,
\item if $\{t,v\} \subseteq R^2_n(G,s) \setminus K$ then $\Ex(X_{vt}) = 1/2$.
\end{itemize}
\end{lemma}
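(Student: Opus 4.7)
The plan is to mirror the case-by-case approach used in the proof of Lemma~\ref{lemma:nosteps:dfsd}, but now tracking the behaviour of the adjusted DFS rather than the bounded DFS. The key structural observation is that $\sigma^{\adjdfs}$ coincides with the ordinary $\sigma^{\dfs}$ until a cycle is discovered in the subgraph induced by the visited nodes; once the unique cycle $K$ is present in $G'[\bm{z}]$, unvisited neighbours in $R^1_n(G',s,\entrance{G'}{K}{s})$ are handled first, then unvisited neighbours in $R^2_n(G',s)$ in ``last seen first visited'' order, with ordinary DFS used for everything else. Throughout I would fix $K$, set $e = \entrance{G}{K}{s}$, and let $u_1,u_2$ denote the two cycle successors of $e$, using the diagrams of the previous lemma as templates.

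I would handle the easier cases first. When $t \in R^1_n(G,s) \setminus R^1_n(G,s,e)$ the tree branch leading to $t$ is disjoint from the cycle, so the priority reordering performed by $\sigma^{\adjdfs}$ after cycle discovery never separates $v$ and $t$ differently from the standard DFS; the tree argument from Lemma~\ref{lemma:nosteps} then gives $\Ex(X_{vt}) = 1/2$. When $\{v,t\}\subseteq K$, $v$ precedes $t$ iff the first cycle direction taken at $e$ is the one toward $v$, a $1/2$ event. When $t \in R^2_n(G,s)\setminus K$ and $v\in K$, I would condition on the cycle direction chosen at $e$: with probability $1/2$ the search enters $v$'s side of the cycle first, and with the remaining $1/2$ it enters the opposite side and then, at the cycle node off which $t$'s subtree hangs, chooses between continuing along the cycle toward $v$ (probability $1/2$) and entering $t$'s subtree (probability $1/2$), which sums to $1/2 + 1/2\cdot 1/2 = 3/4$.

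The central case is $t \in R^1_n(G,s,e)$. At $e$ the strategy picks uniformly from the three relevant directions: toward the $R^1_n$-subtree containing $t$, $u_1$, and $u_2$. For $v \in R^1_n(G,s)$ a tree-like symmetric argument yields $1/2$. For $v\in K$, on both cycle-first branches the cycle closes before $t$ becomes accessible, so $v$ is visited before $t$ unless the first move from $e$ is toward the subtree of $t$; this gives $\Ex(X_{vt}) = 2/3$. For $v\in R^2_n(G,s)\setminus K$, once the cycle is discovered, case~1 of $\act^{\adjdfs}$ prioritises the $R^1_n$-subtree of $t$ over $v$'s $R^2_n$-subtree, so $v$ can only precede $t$ if its subtree is entered before the cycle is completed; a short enumeration of the branching choices at $e$ and at the cycle node below which $v$ hangs yields the claimed $1/3$.

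The remaining case, $\{v,t\} \subseteq R^2_n(G,s)\setminus K$, will be the most delicate and I expect it to be the main obstacle. Here one must distinguish whether $v$ and $t$ hang off the same cycle node or off different ones, and, in the latter case, whether they lie on the same or opposite cycle arcs from $e$. In each sub-configuration one follows the $1/2$-branch at $e$, then the branching at the relevant cycle-node exits $\exit{G}{K}{s}{t}$ and $\exit{G}{K}{s}{v}$, and uses case~2 of $\act^{\adjdfs}$ (which selects the latest active node with an unvisited $R^2_n$-neighbour) to order the two exits after cycle discovery. Careful bookkeeping shows that the ``cycle closes before the exit is taken'' and ``subtree entered before the cycle closes'' contingencies cancel symmetrically, so the contributions sum to $1/2$ in every sub-configuration and the lemma follows.
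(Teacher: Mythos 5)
Your plan is correct and follows essentially the same route as the paper's proof: the same case split on the positions of $t$ and $v$ relative to the cycle, the same key observation that $\sigma^{\adjdfs}$ behaves as ordinary DFS until the cycle is fully traversed and then reprioritises the $R^1_n(G,s,\entrance{G}{K}{s})$ subtree over $R^2_n(G,s)$ subtrees (which is exactly what drives the new value $1/3$), and the same reduction of each case to orderings of the branching successors at the entrance and exit nodes. The only cosmetic difference is that you anticipate the $\{t,v\}\subseteq R^2_n(G,s)\setminus K$ case to be the hardest, whereas the paper dispatches it quickly via the same symmetry you describe.
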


\begin{proof}
Take any connected graph $G$ with starting node $s \in V$ such that either $\cycles(G) = \varnothing$ or $\cycles(G) = \{K\}$, so that $K$ is the unique cycle in $G$.
Take any two nodes $t,v \in V(G)$ such that $v \neq t$ and $t$ is either a leaf or a node in $K$.
Suppose that $v \notin P_G(s,t)$ and $t \notin P_G(s,v)$.

Suppose that $t \in R^1_n(G,s) \setminus R^1_n(G,s,\entrance{G}{K}{s})$, i.e. there is a unique path from $s$ to $t$ in $G$ and that path does not contain any node of the cycle. Since neither $v \in V(G) \setminus P(s,t)$ nor $t \notin P(s,v)$ so $\Ex(X_{vt}) = 1/2$. 
This follows by the same arguments as those used in proof of Lemma~\ref{lemma:nosteps}.

Suppose that $t \in R^1_n(G,s,\entrance{G}{K}{s})$, i.e. there is a unique path from $s$ to $t$ in $G$ and that path contains a node of the cycle. In this case
the path from $s$ to $t$ contains $\entrance{G}{K}{s}$ and no other node of the cycle. 
Suppose that $v \in R^1_n(G,s)$. Then there is exactly one path from $s$ to $v$ in $G$ and $\Ex(X_{vt}) = 1/2$, by analogous arguments to those used in proof of Lemma~\ref{lemma:nosteps}. 

Suppose that $v \in R^2_n(G,s)$. Let $u_1$ be the successor of $\entrance{G}{K}{s}$ on the path from $s$ to $t$, and $u_2$ and $u_3$ be the successors of $\entrance{G}{K}{s}$ on the paths from $s$ to $v$. In the case of $v \notin K$, let $u_4$ be the successor $\exit{G}{K}{s}{v}$ on the path from $s$ to $v$ (c.f. Figure~\ref{fig:conf1}). 

If $v \in K$ then $v$ is visited before $t$ if and only if either $u_2$ and $u_3$ is visited before $u_1$. This happens
with probability $2/3$ and so $\Ex(X_{vt}) = 2/3$. Suppose that $v \notin K$. In this case $v$ is visited before $t$ if and only if $u_4$ is visited before $u_1$. This happens if and only if either
$u_2$ is visited before $u_1$ and $u_3$ and then $u_4$ is visited before $u_3$ or $u_3$ is visited before $u_1$ and $u_2$ and then $u_4$ is visited before $u_2$. To see why these conditions are necessary, notice that if $u_2$ is visited before $u_1$ and then $u_3$ is visited before $u_4$ then the cycle is discovered before $u_4$ is visited and, by construction of the seeking strategy, $u_1$ is visited before $u_4$. Consequently, $t$ is visited before $u_4$ and $v$. Similar argument applies to the case of $u_3$ being visited before $u_1$ and then $u_2$ being visited before $u_4$. Thus $v$ is visited before $t$ with probability $1/3 \cdot 1/2 + 1/3 \cdot 1/2 = 1/3$. 

Suppose that $t \in R^2_n(G,s)$, i.e. there are two paths from $s$ to $t$ in $G$.
Let $u_1 \in K$ be the successor of $\entrance{G}{K}{s}$ in the cycle that is closer to $t$ and $u_2 \in K$ be the successor of $\entrance{G}{K}{s}$ in the cycle that is closer two $v$. 

Suppose that $t \in K$ and $v \in K$, i.e. both $t$ and $v$ belong to the cycle.  Then $v$ is visited before $t$ if and only if $u_2$ is visited before $u_1$. This happens with probability $1/2$ and so $\Ex(X_{vt}) = 1/2$ in this case.

Suppose that $t \in R^2_n(G,s)\setminus K$, i.e. there are two paths from $s$ to $t$ in $G$ and $t$ is not in the cycle. Let $u_3$ be the successor of $\exit{G}{K}{s}{t}$ on the path from $s$ to $t$. 
Suppose that $v \in K$ (c.f. Figure~\ref{fig:conf2}). Then $v$ is visited before $t$ if and only if either $u_2$ is visited before $u_1$ or $u_1$ is visited before $u_2$ and $u_2$ is visited before $u_3$. This happens with probability $1/2 + 1/2 \cdot 1/2 = 3/4$. Hence $\Ex(X_{tv}) = 3/4$ in this case.
Suppose that $v \in R^2_n(G,s) \setminus K$, i.e. there are two paths from $s$ to $v$ in $G$ and $v$ is not in the cycle. Let $u_4$ be the successor of $\exit{G}{K}{s}{v}$ on the path from $s$ to $v$. 
Consider first the case of $u_3 = u_4$. Let $u_5$ be the first node on the path from $u_3$ to $t$ that is not on the path from $u_3$ to $v$ and let $u_6$ be the first node on the path from $u_3$ to $v$ that is not on the path from $u_3$ to $t$ (c.f. Figure~\ref{fig:conf3}). Then $v$ is visited before $t$ if and only if $u_6$ is visited before $u_5$. This happens with probability $1/2$ and so $\Ex(X_{vt}) = 1/2$. 
Second, consider the case of $u_3 \neq u_4$ (c.f. Figure~\ref{fig:conf4}).
In this case $t$ is visited before $v$ if and only if $u_4$ is visited before $u_3$, which happens with probability $1/2$ in both cases of $u_1$ being visited before $u_2$ and $u_2$ being visited before $u_1$. Hence $v$ is visited before $t$ with probability $1/2$ and so $\Ex(X_{tv}) = 1/2$.
\end{proof}

The following corollary from Lemmata~\ref{lemma:nosteps:dfsd} and~\ref{lemma:nosteps:adfs} and Corollary~\ref{cor:nosteps:dfs} characterizes the probabilities of a node being visited before another node for the seeking strategies being a convex combination of strategies $\dfs_d$, $\adjdfs$, and bounded $\dfs$.

\begin{corollary}
\label{cor:nosteps:star}
Let $d > 0$, $G = \langle V, E\rangle$ be a graph containing at most one cycle $K$, and let $t \in V$ be either a leaf or a node in $K$. Under the seeking strategy 
\begin{equation*}
\sigma_d^{\star} = \frac{3}{8} \sigma^{\adjdfs} + \frac{3}{8}\sigma^{\dfs} + \frac{1}{4} \sigma^{\dfs_d},
\end{equation*}
with $\lambda \in [0,1]$, for any node $v \in V\setminus P_G(s,t)$ such that $t\notin P_G(s,v)$,
\begin{itemize}
\item if $t \in R^{1}_n(G,s) \setminus R^1_n(G,s,\entrance{G}{K}{s})$ then
\begin{equation*}
\Ex(X_{vt}) = \begin{cases}
              \frac{1}{2}, & \textrm{if $\dist{G}{s}{v} \leq d$,}\\
              \frac{3}{8}, & \textrm{if $\dist{G}{s}{v} > d$.}
              \end{cases}
\end{equation*}
\item if $t \in R^1_n(G,s,\entrance{G}{K}{s})$ then
\begin{equation*}
\Ex(X_{vt}) = \begin{cases}
              \frac{1}{2}, & \textrm{if $v\in R^1_d(G,s) \cap R^1_n(G,s)$,}\\
              \frac{13}{24}, & \textrm{if $v\in R^1_d(G,s) \cap R^2_n(G,s)$ and $K\subseteq R^2_d(G,s)$,}\\
              \frac{1}{2}, & \textrm{if $v\in R^1_d(G,s) \cap (R^2_n(G,s)\setminus K)$ and $K \nsubseteq R^2_d(G,s)$,}\\
              \frac{5}{8}, & \textrm{if $v\in R^1_d(G,s) \cap K$ and $K \nsubseteq R^2_d(G,s)$,}\\
              \frac{2}{3}, & \textrm{if $v\in R^2_d(G,s) \cap K$,}\\
              \frac{13}{24}, & \textrm{if $v\in R^2_d(G,s)\setminus K$,}\\
              \frac{3}{8}, & \textrm{if $v\in R^1_n(G,s) \setminus R^1_d(G,s)$,}\\
              \frac{1}{2}, & \textrm{if $v\in K \setminus (R^1_d(G,s) \cup R^2_d(G,s))$,}\\
              \frac{3}{8}, & \textrm{if $v\in R^2_n(G,s) \setminus (R^1_d(G,s) \cup R^2_d(G,s)\cup K)$,}\\
              \end{cases}
\end{equation*}
\item if $t \in K$ and $v \in K$ then $\Ex(X_{vt}) = 1/2$,
\item if $t \in R^2_n(G,s) \setminus K$ and $v \in K$ then
\begin{equation*}
\Ex(X_{vt}) = \begin{cases}
              \frac{13}{16}, & \parbox[t]{9cm}{if $t \in R^1_d(G,s) \cap R^1_d(G,s,v)$,}\\
              \frac{3}{4}, & \parbox[t]{8cm}{if $t \notin R^1_d(G,s) \cap R^1_d(G,s,v)$ and $v \in R^2_d(G,s)$,}\\
              \frac{11}{16}, & \parbox[t]{8cm}{if $t \notin R^1_d(G,s) \cap R^1_d(G,s,v)$ and $v \in R^1_d(G,s)$.}\\
              \frac{9}{16}, & \parbox[t]{10cm}{if $t \notin R^1_d(G,s) \cap R^1_d(G,s,v)$ and $v \notin R^1_d(G,s)\cup R^2_d(G,s)$.}\\
              \end{cases}
\end{equation*}

\item if $\{t,v\} \subseteq R^2_n(G,s) \setminus K$ and $K\subseteq R^2_d(G,s)$ then
\begin{align*}
\Ex(X_{vt}) & =  \frac{1}{2} + \frac{1}{32}[v \in R^2_d(G,s)][t \in R^1_d(G,s) \cap R^1_d(G,s,\exit{G}{K}{s}{v})]
\end{align*}

\item if $\{t,v\} \subseteq R^2_n(G,s) \setminus K$ and $K\setminus R^2_d(G,s) \neq \varnothing$ then
\begin{align*}
\Ex(X_{vt}) =  \frac{1}{2} & + \frac{1}{16}[t \in R^1_d(G,s)][v \in R^2_d(G,s)]
\end{align*}
\end{itemize}
\end{corollary}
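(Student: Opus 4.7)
The plan is to fix a node $h$ at distance at most $d$ from $s$ and decompose the number of steps $X_h$ to reach $h$ using the pairwise indicators $X_{vh}$ introduced in the proof of Lemma~\ref{lemma:nosteps}. Let $U_h$ be the set of nodes reachable from $s$ only through $h$; each such node is never visited before $h$ under any strategy in the mixture, because every path from $s$ to a node of $U_h$ contains $h$. Setting $V' = V \setminus (P_G(s,h) \cup U_h \cup \{h\})$, I obtain
\begin{equation*}
\Ex(X_h) = (|P_G(s,h)| - 1) + \sum_{v \in V'} \Ex(X_{vh}),
\end{equation*}
and since $\dist{G}{s}{h} \leq d$, the first term is at most $d$ and can be absorbed into $C(d)$.

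The core of the argument is to plug into the second sum the values provided by Corollary~\ref{cor:nosteps:star}. I would go through each case of the corollary and check arithmetically which values exceed $\tfrac{9}{16}$. A direct calculation shows that $\tfrac{3}{8}$, $\tfrac{1}{2}$, $\tfrac{13}{24} = \tfrac{26}{48} < \tfrac{27}{48} = \tfrac{9}{16}$, $\tfrac{1}{2}+\tfrac{1}{32}$, and $\tfrac{1}{2}+\tfrac{1}{16} = \tfrac{9}{16}$ all satisfy $\Ex(X_{vh}) \leq \tfrac{9}{16}$, whereas the strictly larger values $\tfrac{2}{3}$, $\tfrac{5}{8}$, $\tfrac{11}{16}$, $\tfrac{3}{4}$, and $\tfrac{13}{16}$ occur only when $v$ lies in the ``small'' set $K \cap R_d(G,s)$, or (in the subcase $h \in R^2_n(G,s) \setminus K$) additionally on a path from $s$ to $h$ of length at most $d$.

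Rewriting $\sum_{v \in V'} \Ex(X_{vh}) = \tfrac{9}{16}|V'| + \sum_{v \in V'}\bigl(\Ex(X_{vh}) - \tfrac{9}{16}\bigr)$, the first piece is at most $\tfrac{9}{16}n$. The ``excess'' piece is supported on at most $|K \cap R_d(G,s)| + |P_G(s,h)|$ nodes. Because the cycle $K$ can be traversed in only two directions from its entrance $\entrance{G}{K}{s}$ and only walks of total length at most $d$ from $s$ qualify, one gets $|K \cap R_d(G,s)| \leq 2d+1$, while $|P_G(s,h)| \leq d+1$. Each individual excess is bounded by $\tfrac{13}{16} - \tfrac{9}{16} = \tfrac{1}{4}$, so the total excess is $O(d)$, giving a function $C(d)$ that depends only on $d$ and completing the bound $\Ex(X_h) \leq \tfrac{9}{16}n + C(d)$.

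The principal obstacle is the meticulous case-by-case verification that every entry of Corollary~\ref{cor:nosteps:star} strictly exceeding $\tfrac{9}{16}$ does place $v$ in the $O(d)$-sized support described above; the trickiest configuration is $h \in R^2_n(G,s) \setminus K$, where the values $\tfrac{13}{16}$, $\tfrac{3}{4}$, and $\tfrac{11}{16}$ each demand $v \in R^1_d(G,s) \cup R^2_d(G,s)$ or $v$ on a short $s$-to-$h$ path, and one must separately enumerate the $\tfrac{1}{2}+\tfrac{1}{32}$ and $\tfrac{1}{2}+\tfrac{1}{16}$ subcases to confirm that they stay at or below $\tfrac{9}{16}$. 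A subordinate verification is the validity of discarding $U_h$ in the presence of a cycle: one argues, as in Lemma~\ref{lemma:nosteps}, that because $h \in P_G(s,v)$ for every $v \in U_h$, each of the three components $\sigma^{\dfs}$, $\sigma^{\adjdfs}$, and $\sigma^{\dfs}_d$ of $\sigma^{\star}$ can only reach $U_h$ after visiting $h$.
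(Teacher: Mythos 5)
Your proposal does not prove the statement in question. The statement (Corollary~\ref{cor:nosteps:star}) is the case-by-case computation of the values $\Ex(X_{vt})$ under the mixture $\sigma_d^{\star}$ — for instance that $\Ex(X_{vt}) = \tfrac{13}{24}$ when $t \in R^1_n(G,s,\entrance{G}{K}{s})$ and $v \in R^2_d(G,s)\setminus K$, or $\tfrac{13}{16}$ in the first subcase of $t \in R^2_n(G,s)\setminus K$, $v \in K$. Your argument never establishes any of these values; on the contrary, you explicitly ``plug into the second sum the values provided by Corollary~\ref{cor:nosteps:star},'' i.e.\ you assume the very statement you were asked to prove and then derive from it the downstream bound $\Ex(X_h) \leq \tfrac{9}{16}n + C(d)$. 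What you have written is, in substance, a proof sketch of Proposition~\ref{pr:cycle:lowerbound} (and as such it tracks the paper's argument for that proposition fairly closely: the same decomposition into $P_G(s,t)$, the excess-over-$\tfrac{9}{16}$ accounting, and the $O(d)$ count of nodes in $K \cap R_d(G,s)$ and on short $s$--$t$ paths). But that is a different result.

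The missing content is the following. Since $\sigma_d^{\star}$ plays $\sigma^{\adjdfs}$ with probability $\tfrac{3}{8}$, $\sigma^{\dfs}$ with probability $\tfrac{3}{8}$, and $\sigma^{\dfs_d}$ with probability $\tfrac{1}{4}$, linearity of expectation gives
\begin{equation*}
\Ex(X_{vt}) = \tfrac{3}{8}\,\Ex^{\adjdfs}(X_{vt}) + \tfrac{3}{8}\,\Ex^{\dfs}(X_{vt}) + \tfrac{1}{4}\,\Ex^{\dfs_d}(X_{vt}),
\end{equation*}
and the proof consists of reading off the three component expectations from Lemma~\ref{lemma:nosteps:adfs}, Corollary~\ref{cor:nosteps:dfs}, and Lemma~\ref{lemma:nosteps:dfsd} respectively, in each of the (intersected) case partitions, and checking the arithmetic. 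For example, for $t \in R^1_n(G,s,\entrance{G}{K}{s})$ and $v \in R^2_d(G,s)\setminus K$ the three values are $\tfrac{1}{3}$, $\tfrac{2}{3}$, and $\tfrac{2}{3}$, giving $\tfrac{3}{8}\cdot\tfrac{1}{3}+\tfrac{3}{8}\cdot\tfrac{2}{3}+\tfrac{1}{4}\cdot\tfrac{2}{3}=\tfrac{13}{24}$. The real work lies in aligning the differently-phrased case distinctions of the three source results (e.g.\ the $\dfs_d$ lemma distinguishes nodes by distance from $s$ and by $K \subseteq R^2_d(G,s)$ versus $K \setminus R^2_d(G,s) \neq \varnothing$, while the adjusted-DFS lemma distinguishes $v \in K$ from $v \in R^2_n(G,s)\setminus K$), so that each case of the corollary is covered by exactly one case of each ingredient. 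None of this appears in your proposal, so the statement remains unproven as written.
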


\begin{proof}[Proof of Proposition~\ref{pr:cycle:lowerbound}]
To obtain the upper bound on the number nodes visited by strategy $\sigma_d^{\star}$ until a node at distance $d$ from node $s$ is reached notice that for any node on all the paths from $s$ to $t$, $\Ex(X_{vt}) = 1$. Moreover,
since, for any two nodes $u$ and $v$, $X_{uv} = 1 - X_{vu}$ so $\Ex(X_{uv}) = 1-\Ex(X_{vu})$. By these two observations, together with Corollary~\ref{cor:nosteps:star}, for any node $t$ at distance $d$ from $s$ and any node $v$, 
\begin{equation*}
\Ex(X_{vt}) = \begin{cases}
                 1, & \textrm{if $v \in P_G(s,t)$} \\
                 \frac{5}{8}, & \textrm{if $t \in R^1_n(G,s,\entrance{G}{K}{s})$ and $v\in R^1_d(G,s) \cap K$ and $K \nsubseteq R^2_d(G,s)$,}\\
                 \frac{2}{3}, & \textrm{if $t \in R^1_n(G,s,\entrance{G}{K}{s})$ and $v\in R^2_d(G,s) \cap K$,}\\
                 \frac{13}{16}, & \textrm{if $t \in R^2_n(G,s) \setminus K$ and $t \in R^1_d(G,s) \cap R^1_d(G,s,v)$,}\\
                 \frac{3}{4}, & \textrm{if $t \in R^2_n(G,s) \setminus K$, $t \notin R^1_d(G,s) \cap R^1_d(G,s,v)$ and $v \in R^2_d(G,s)$,}\\
                 \frac{11}{16}, & \textrm{if $t \in R^2_n(G,s) \setminus K$, $t \notin R^1_d(G,s) \cap R^1_d(G,s,v)$ and $v \in R^1_d(G,s)$.}\\
                 \end{cases}
\end{equation*}
and in all the remaining cases, $\Ex(X_{vt}) \leq 9/16$.
Notice that the number of nodes $v$ satisfying the first six cases is at most $d$. In the first case,
there are at most $d$ nodes that are on every path from $s$ to $t$. 
The second and the sixth case is satisfied only for nodes $v$ that are in the cycle and are at distance at most $d$ from the source. There are at most $2d-1$ such nodes. The third and the fifth case is satisfied only for nodes $v$ that are in the cycle and are reachable by two paths of length at most $d$ from the source. There are at most $d$ such nodes. The fourth case is satisfied only for nodes $v$ that are on the unique path of length at most $d$ from $s$ to $t$ (the case applies when $t$ is reachable by two paths from $s$ but only one of them is of length at most $d$). There are at most $d-1$ such nodes. 

Hence the expected number of nodes visited by seeking strategy $\sigma_d^{\star}$ before $t$ is reached is not greater than
\begin{equation*}
\frac{9}{16}(n - d) + \frac{11}{16} (2d-1) = \frac{9}{16}n + \frac{13d - 11}{16}.
\end{equation*}
\end{proof}

\end{document}